\documentclass{article}

\usepackage{microtype}
\usepackage{graphicx}
\usepackage{subcaption}
\usepackage{booktabs} 
\usepackage{algorithmic}
\usepackage{tikz}
\usepackage{comment}
\usetikzlibrary{arrows.meta}

\usepackage{hyperref}

\usepackage[accepted]{icml2026}

\usepackage{amsmath}
\usepackage{amssymb}
\usepackage{mathtools}
\usepackage{amsthm}
\usepackage{arydshln}
\allowdisplaybreaks

\usepackage[capitalize,noabbrev]{cleveref}

\theoremstyle{plain}
\newtheorem{theorem}{Theorem}[section]

\theoremstyle{definition}
\newtheorem{definition}[theorem]{Definition}
\newtheorem{assumption}[theorem]{Assumption}
\theoremstyle{remark}
\newtheorem{remark}[theorem]{Remark}

\usepackage[textsize=tiny]{todonotes}

\icmltitlerunning{Neural Vector Lyapunov–Razumikhin Certificates for Delayed Interconnected Systems}

\begin{document}

\twocolumn[
  \icmltitle{Neural Vector Lyapunov–Razumikhin Certificates for Delayed Interconnected Systems}



  \icmlsetsymbol{equal}{*}

  \begin{icmlauthorlist}
    \icmlauthor{Jingyuan Zhou}{cee}
    \icmlauthor{Yuexuan Wang}{iora}
    \icmlauthor{Kaidi Yang}{cee}
  \end{icmlauthorlist}

  \icmlaffiliation{cee}{Department of Civil and Environmental Engineering, National University of Singapore}
  \icmlaffiliation{iora}{Institute of Operations Research and Analytics, National University of Singapore}

  \icmlcorrespondingauthor{Kaidi Yang}{kaidi.yang@nus.edu.sg}

  \icmlkeywords{Delayed Systems, Neural Certificates, Formal Verification, Interconnected Systems}

  \vskip 0.3in
]



\printAffiliationsAndNotice{}  

\begin{abstract}
Ensuring scalable input-to-state stability (sISS) is critical for the safety and reliability of large-scale interconnected systems, especially in the presence of communication delays. While learning-based controllers can achieve strong empirical performance, their black-box nature makes it difficult to provide formal and scalable stability guarantees. To address this gap, we propose a framework to synthesize and verify neural vector Lyapunov-Razumikhin certificates for discrete-time delayed interconnected systems.
Our contributions are three-fold. First, we establish a sufficient condition for discrete-time sISS via vector Lyapunov-Razumikhin functions, which enables certification for large-scale delayed interconnected systems. Second, we develop a scalable synthesis and verification framework that learns the neural certificates and verifies the certificates on reachability-constrained delay domains with scalability analysis. Third, we validate our approach on mixed-autonomy platoons, drone formations, and microgrids against multiple baselines, showing improved verification efficiency with competitive control performance.
\end{abstract}

\section{Introduction}
Large-scale interconnected systems arise in many real-world applications, including intelligent transportation~\citep{ZHOU2024104885,zhou2024enhancing,11277300}, robotics~\citep{ren2023vector}, and smart grids~\citep{silva2021string,milanovic2017modeling}. Advances in sensing and communication technologies have made these systems increasingly complex and tightly coupled. A central challenge in controlling such systems is to ensure stability of the overall closed-loop dynamics.

In practice, such interconnected systems operate under substantial uncertainties, e.g., modeling mismatch~\citep{zhang2019distributed}, exogenous disturbances~\citep{vafamand2021decentralized}, time-varying operating conditions~\citep{chen2021distributed}, and unreliable communication~\citep{zhang2022adaptive,wang2022transmission}. Among these factors, delays introduced by sensing, computation, and networked transmission are particularly difficult to address and often dominate performance degradation, making delay-awareness a key requirement for stability analysis and controller design.

Control strategies for ensuring stability in delayed interconnected systems can be broadly categorized into model-based and learning-based approaches. Model-based controllers, such as barrier Lyapunov-based control~\citep{zenati2023admissibility}, model predictive control~\citep{zhu2020observer}, and sliding mode control~\citep{yang2023fractional}, explicitly incorporate stability into control objectives and constraints. 
This is often achieved by modeling delays directly~\citep{wang2025knowledge} or embedding delay-compensation mechanisms~\citep{molnar2017application} into the controller design. However, these approaches typically require accurate system models, and their effectiveness can degrade substantially when applied to delayed interconnected systems with unknown dynamics. To overcome these limitations, learning-based methods for modeling dynamics~\citep{zhu2021neural,schlaginhaufen2021learning,holt2022neural,ji2024trainable,stephany2024learning} and controllers~\citep{sun2019adaptive,yuan2023dacom,zhang2023multi,chen2024adp} have gained increasing popularity due to their ability to manage unknown dynamics and delay effects. However, the black-box nature of neural networks makes it challenging to provide guarantees for stability in delayed systems. The only notable attempt is \citet{hedesh2025delay}, which requires the closed-loop system to be internally positive. This restricts its applicability, since many practical dynamics are not positive.
In most existing neural network-based controllers, stability is treated as a soft constraint~\citep{liotet2022delayed,holt2023neural}, without offering rigorous stability guarantees.
Consequently, formally guaranteeing stability in large-scale delayed interconnected systems remains an open and challenging problem.

\emph{Statement of Contribution}. To bridge the research gap, this work develops a scalable delay-aware certification framework for large-scale interconnected systems by synthesizing and verifying neural Lyapunov-Razumikhin certificates in discrete time. Our contributions are threefold. First, we establish a sufficient condition for discrete-time scalable input-to-state stability (sISS) of delayed interconnected systems using vector Lyapunov-Razumikhin functions, enabling scalable stability certification of such systems. 
Second, we propose a scalable synthesis-and-verification pipeline that learns neural certificates and verifies them over reachability-constrained delay domains. To address scalability, we develop theoretical foundations that enable the reuse of certificates across structurally equivalent systems, thereby improving verification efficiency. 
Third, we demonstrate the effectiveness of the proposed method on mixed-autonomy platoons, drone formations, and microgrids, where it improves verification efficiency while achieving competitive control performance compared with multiple baselines.


\section{Related Work}
\label{sec: Related Work}
This work makes contributions in two main research areas: (i) stability of delayed dynamic systems, and (ii) neural certificates and stable-by-design neural control.
\subsection{Stability of Delayed Dynamic Systems}
Stability analysis for time-delayed systems has a long history, with classical tools including Lyapunov-Krasovskii functionals~\citep{mazenc2012lyapunov,zhang2016notes,zhang2018passivity,aleksandrov2024existence} and Lyapunov-Razumikhin techniques~\citep{pepe2017control,schlaginhaufen2021learning,nekhoroshikh2022hyperexponential,liu2023direct}. However, most delay-dependent conditions are derived for fixed-size networks, limiting large-scale applicability. This contrasts with scalable stability results for delay-free interconnected systems~\citep{qiu2024scalable}. To the best of our knowledge, explicitly scalable delay-aware stability
conditions for general interconnected systems remain unavailable.

\subsection{Neural Certificates and Stable-by-Design Neural Control}
To provide guarantees for learning-based controllers, existing methods generally fall into two categories: certificate-based verification and stable-by-design neural control. 

The first approach is neural certificate-based method, which learns Lyapunov- or barrier-type certificates~\citep{dai2021lyapunov,clark2021verification,qinlearning,yanglyapunov,yang2023model,zhang2023exact,zhang2023compositional,schlaginhaufen2021learning,pmlr-v270-hu25a,10886052,zhang2025gcbf+,mandal2024safe,liu2024compositionally,nadali2024neural,zhou2025synthesis,neustroev2025neural,yu2025neural,10591251}. These certificates can be formally verified with control theoretical-based approaches~\citep{anand2023formally,10886052,zhang2024seev,henzinger2025formal,ren2025efficient} and neural network verification tools like~\cite{wang2021beta,lopez2023nnv,wu2024marabou}. These methods allow for counterexample-guided inductive synthesis~\cite{ding2022novel,mandal2024formally,zhao2024neural} or certified training~\cite{mueller2023certified}, thereby  synthesizing controllers that satisfy barrier or Lyapunov properties. Nonetheless, to the best of the authors' knowledge, no existing work has proposed neural certificates for large-scale delayed interconnected systems.

The second direction is stable-by-design neural control, which embeds stability
or robustness guarantees directly into the neural architecture or controller
parameterization. Examples include compositional port-Hamiltonian neural
controllers~\citep{furieri2022distributed,cheng2024learning,zakwan2026neural} and stable neural
feedback parameterizations with robustness guarantees
~\citep{barbara2025react,manchester2026neural}. However, the former relies on
dissipativity-oriented structural assumptions and does not explicitly consider
communication delays, while the latter mainly targets single nonlinear systems
rather than delayed interconnected dynamics. Hence, these methods are not
directly applicable to large-scale delayed
interconnected systems.

\section{Problem Statement}
Consider an interconnected system of $N$ agents indexed by $i \in \mathcal{N}=\{1,\dots,N\}$.
The interconnection topology is captured by an adjacency matrix $G\in\{0,1\}^{N\times N}$, where $G_{i,j}=1$ indicates that the dynamics of agent $i$ depend on the state of agent $j$, and $G_{i,j}=0$ otherwise.
We represent the system as a tuple $\mathcal{I}=(\mathcal{N},\{\mathcal{E}_i\}_{i\in\mathcal{N}},\{f_i\}_{i\in\mathcal{N}})$, where $\mathcal{E}_i=\{j\in\mathcal{N}\mid G_{i,j}=1\}$ denotes the neighbor set of agent $i$, and $f_i$ denotes its local dynamics:
\begin{align}
x_{i,k+1}
= f_i\!\Big(x_{i,k},\, u_{i,k},\, \{x_{j,k-s_{ij}}\}_{j \in \mathcal{E}_i},\, d_{i,k}\Big),
\label{eq: system}
\end{align}
where $x_{i,k}\in\mathbb{R}^{n_i}$ denotes the state of agent $i$ at time step $k$, $u_{i,k}\in\mathbb{R}^{p_i}$ is its control input,  
and $\{x_{j,k-s_{ij}}\}_{j\in\mathcal{E}_i}$ denotes the neighbor states received by agent $i$. The delay variable $s_{ij}$ is the communication delay from agent $j$ to agent $i$, bounded by a known constant $\tau_{i,j}\in\mathbb{Z}_{\ge 0}$. For notation simplicity, we define $\tau_{\max}=\max_{i\in\mathcal{N},j\in\mathcal{E}_i} \tau_{i,j}$. The term $d_{i,k}\in\mathcal{W}_i\subset \mathbb{R}^{n_i}$ represents an exogenous disturbance acting on agent $i$ at time step $k$, where $\mathcal{W}_i$ is a given bounded set. Let $x_k:=\{x_{i,k}\}_{i=1}^N\in\mathbb{R}^{\sum_{i=1}^N n_i}$ denote the stacked system state. Define $\tau_{\max}:=\max_{i,j}\tau_{ij}$. The initial condition is given by a history sequence
$x_{i,k}=\phi_i(k),  k\in\{-\tau_{\max},-\tau_{\max}+1,\dots,0\}$,
for each $i\in\{1,\dots,N\}$, where
$
\phi_i:\{-\tau_{\max},-\tau_{\max}+1,\dots,0\}\to\mathbb R^{n_i}
$
denotes the initial history of subsystem $i$.

\begin{remark}
We assume that sensing and actuation on each agent are delay-free (i.e., no self-state measurement delay and no control execution delay), and only the communication of neighbor states $\{x_{j,k}\}_{j\in\mathcal{E}_i}$ is subject to bounded delays. This setting is standard in networked multi-agent control~\citep{jia2019synchronization}.
\end{remark}

Our goal is to design a neural network controller without assuming explicit knowledge of the local dynamics $f_i(\cdot)$ since accurate dynamics are usually hard to identify in the real world. Specifically, for each agent $i$, we seek a policy
\begin{align}
u_{i,k}=\pi_i\!\Big(x_{i,k},\,\{x_{j,k-s_{ij}}\}_{j\in\mathcal{E}_i}\Big),
\label{eq: policy}
\end{align}
such that the resulting closed-loop delayed interconnected system in~\eqref{eq: system} is stable for all admissible delays $s_{ij}\in\{1,\dots,\tau_{i,j}\}$ and bounded disturbances $d_{i,k}$.

Specifically, we adopt a scalable input-to-state stability (sISS) notion under bounded delay, which requires the state to be uniformly bounded for any network size $N$. This leads to the desired property that the closed-loop robustness does not degrade as the interconnected system grows.
\begin{definition}[Discrete-Time Scalable Input-to-State Stability Under Bounded Delay]\label{def: siss}
The discrete-time interconnected system in Eq.~\eqref{eq: system} is said to be \emph{sISS} if there exist functions \(\beta \in \mathcal{KL}\) and \(\mu \in \mathcal{K}\), independent of the network size \(N\), such that for any $k \in \mathbb{Z}_{\geq0}, i \in\mathcal{N}$, any initial history \(\{x_{i,-s}\}_{s=0}^{\tau}\), and any bounded disturbance sequence \(d_i\), the following inequality holds:
\begin{align}
&\max_{i\in\mathcal{N}}|x_{i,k}|_2
\leq
\beta\left(\max_{i\in\mathcal{N}}\max_{s\in\{0,\dots,\tau\}}|x_{i,-s}|_2,\, k\right)\notag\\
&+
\mu\left( \max_{i\in\mathcal{N}} \|d_i\|_{\mathcal{L}_{\infty}} \right).
\label{eq: sISS_def}
\end{align}
\end{definition}


Typically, for systems without delays, scalable input-to-state stability can be certified using sISS vector Lyapunov functions, e.g., by verifying a network-size-independent small-gain condition~\citep[Theorem~1]{silva2024scalable}. However, for delayed systems, the closed-loop dynamics are no longer memoryless, since each agent evolves based on delayed neighbor states, which invalidates a direct application of the delay-free vector Lyapunov argument. This motivates delay-aware certificates that explicitly account for the bounded delay $\tau_{\max}$ while preserving network-size-independent stability constants.

\section{Discrete-time sISS Vector Lyapunov Razumikhin Function Formulation}

We provide a sufficient condition for sISS of delayed systems (see Definition~\ref{def: siss}) based on Lyapunov functions and a Razumikhin-type condition that handles the state delay. The condition is given in Theorem~\ref{thm:lrf-siss} (see Appendix~\ref{app: proof_1} for proof).  
 \begin{theorem}[Discrete-Time sISS via Vector Lyapunov-Razumikhin Functions]
\label{thm:lrf-siss}
The discrete-time delayed system in Eq.~\eqref{eq: system} is sISS if there exist scalars $p > 1$, $\epsilon \in (0,1)$,  $\psi \ge 0$, class-$\mathcal{K}_\infty$ functions $\alpha_1, \alpha_2 \in \mathcal{K}_\infty$, Lyapunov functions $V_i: \mathbb{R}^{n_i} \to \mathbb{R}_{\ge 0},~i \in \mathcal{N}$, 
positive gains $\Gamma=\{\gamma_{i,j}\}$ satisfying the small-gain condition
\begin{align}
\max_{i \in \mathcal{N}} \sum_{j \in \mathcal{E}_i \cup \{i\}} \gamma_{i,j} \le (1-\epsilon), 
\label{eq:small gain}
\end{align}
such that the following conditions hold: 

\noindent \textbf{(i) Class-$\mathcal{K}_\infty$ Bounds:} For all $i \in \mathcal{N}$ and all states $x_{i,k}$:
\begin{align}
\alpha_1(|x_{i,k}|_2) \le V_i(x_{i,k}) \le \alpha_2(|x_{i,k}|_2)
\label{eq:class_K_bounds}
\end{align}

\noindent \textbf{(ii) Conditional Decrement:} For all $i \in \mathcal{N}$, if the Razumikhin condition
\begin{align}
\max_{\substack{j \in \mathcal{E}_i \cup \{i\} \\ s \in \{1, \dots, \tau_{\max}\}}} V_j(x_{j,k-s}) \le p \cdot V_i(x_{i,k})
\label{eq:Razumikhin}
\end{align}
holds at time $k$, then the following decremental inequality is satisfied: 
\begin{align}
\label{eq:decremental condition}
V_i(x_{i,k+1}) \le \sum_{j \in \mathcal{E}_i \cup \{i\}} \gamma_{i,j} V_j(x_{j,k}) + \psi\|d_{i,k}\|_2
\end{align}
\end{theorem}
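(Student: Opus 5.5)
The plan is to work with the scalar envelope $W_k := \max_{i\in\mathcal{N}} \max_{s\in\{0,\dots,\tau_{\max}\}} V_i(x_{i,k-s})$, which is the largest Lyapunov value over all agents and the whole delay window. Set $D := \psi \max_i \|d_i\|_{\mathcal{L}_\infty}$. We aim to show a one-step bound of the form
\begin{align*}
\max_i V_i(x_{i,k+1}) \le \max\big\{ \rho\, W_k,\ c\, D\big\},
\end{align*}
with $\rho\in(0,1)$ and $c>0$ depending only on $p,\epsilon$, hence independent of $N$. From this bound we derive geometric decay of $W$ over windows of length $\tau_{\max}+1$. Finally, the sandwich bounds \eqref{eq:class_K_bounds} convert the result into the $\mathcal{KL}$ and $\mathcal{K}$ estimate \eqref{eq: sISS_def}.

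\textbf{Key step (one-step bound).} Fix $k$ and an agent $i$, and split into two cases.

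\emph{Case A: the Razumikhin condition \eqref{eq:Razumikhin} holds.} By \eqref{eq:decremental condition} and the small-gain bound \eqref{eq:small gain},
\begin{align*}
V_i(x_{i,k+1}) \le (1-\epsilon)\max_{j} V_j(x_{j,k}) + D \le (1-\epsilon) W_k + D.
\end{align*}

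\emph{Case B: the Razumikhin condition fails.} Then $V_i(x_{i,k}) < p^{-1}\max_{j,s} V_j(x_{j,k-s}) \le p^{-1} W_k$, so agent $i$ is already small relative to the window. We still need to control $V_i(x_{i,k+1})$.

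This case is the main obstacle, because the theorem gives no decrease hypothesis when \eqref{eq:Razumikhin} fails. I would first try to use the decrement inequality regardless of the condition; many discrete Razumikhin proofs implicitly read \eqref{eq:decremental condition} as holding whenever its precondition is met for the index achieving the maximum. Concretely, let $i^\star$ attain $\max_i V_i(x_{i,k+1})$ and argue by contradiction. Suppose $V_{i^\star}(x_{i^\star,k+1}) > \max\{(1-\epsilon)W_k + D,\ \ldots\}$. If \eqref{eq:Razumikhin} held at $(i^\star,k)$, Case A would contradict this. So it fails at $(i^\star,k)$, which means some neighbor's past value exceeds $p\,V_{i^\star}(x_{i^\star,k})$.

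If that argument does not close, I would instead apply the standard discrete Razumikhin trick one step later, at time $k+1$. Define $\bar V_{k} := \max_i V_i(x_{i,k})$ and consider the first time $\bar V$ exceeds $\max\{W_k,\ \tfrac{p}{p-1}\ldots\}$. At such a first-exceedance time the maximizing agent automatically satisfies \eqref{eq:Razumikhin}, since all window values are at most $W_k \le V_{i^\star}\le pV_{i^\star}$. The decrement then gives a contradiction whenever $(1-\epsilon)<1$ and $D$ is absorbed into the threshold.

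\textbf{Decay and sISS.} This first-exceedance argument yields two consequences:
\begin{itemize}
\item \emph{Boundedness:} $W_{k+1}\le \max\{W_k, D/\epsilon\}$.
\item \emph{Geometric contraction:} once $W_k \ge cD$, each agent's value drops below $\max\{(1-\epsilon)+\text{slack}\}\cdot W_k$. Choosing an intermediate level $q = \max\{1-\epsilon/2,\, 1/p\}^{1/2}$-type factor, after $\tau_{\max}+1$ steps every entry of the window is at most $\rho W_k$ with $\rho<1$. At that point all agents either decreased via Case A or were already below $W_k/p$.
\end{itemize}
Iterating gives
\begin{align*}
W_k \le \rho^{\lfloor k/(\tau_{\max}+1)\rfloor} W_0 + c D.
\end{align*}
Then
\begin{align*}
\max_i|x_{i,k}|_2 \le \alpha_1^{-1}\Big(2\rho^{\lfloor k/(\tau_{\max}+1)\rfloor}\alpha_2\big(\max_{i,s}|x_{i,-s}|_2\big)\Big) + \alpha_1^{-1}(2cD),
\end{align*}
using $\alpha_1^{-1}(a+b)\le \alpha_1^{-1}(2a)+\alpha_1^{-1}(2b)$. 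The first term defines $\beta\in\mathcal{KL}$ and the second defines $\mu\in\mathcal{K}$. Neither depends on $N$, because only the max over agents and the uniform constants $\epsilon,p,\psi,\tau_{\max}$ enter.
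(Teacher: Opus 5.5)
Your Case~A is fine. The argument breaks exactly where you flag the obstacle, and the first-exceedance step you propose for Case~B does not work in discrete time. Hypothesis (ii) at time $k$ constrains only the transition $k\to k+1$, and only for agents whose Razumikhin condition holds at time $k$. Suppose $k+1$ is a first time at which $\max_i V_i(x_{i,k+1})$ exceeds the envelope $W_k$, and $i^\star$ is the maximizer. You are right that \eqref{eq:Razumikhin} then holds at $(i^\star,k+1)$. But the decrement \eqref{eq:decremental condition} then bounds $V_{i^\star}(x_{i^\star,k+2})$, not the value $V_{i^\star}(x_{i^\star,k+1})$ that produced the exceedance. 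That jump came from the step out of time $k$, which is governed by the condition at $(i^\star,k)$. Your own contradiction attempt shows this condition must fail there, and the hypotheses permit that. The continuous-time trick works because the condition at the exceedance instant controls the derivative at that same instant; in discrete time there is a one-step lag. Hence neither $W_{k+1}\le\max\{W_k,D/\epsilon\}$ nor the window contraction follows. In particular, an agent being ``already below $W_k/p$'' at time $k$ says nothing about its value at time $k+1$.

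This cannot be repaired, because the hypotheses as stated do not imply sISS. Take $N=2$, scalar states, $\mathcal{E}_1=\{2\}$, $\mathcal{E}_2=\{1\}$, unit delays ($\tau_{\max}=1$), $d\equiv 0$, and the closed loop $x_{i,k+1}=\tfrac12x_{i,k}+7\max\{0,\,|x_{j,k-1}|-\tfrac32|x_{i,k}|\}$ for $j\neq i$. Use $V_i(x)=|x|$, $\alpha_1=\alpha_2=\mathrm{id}$, $p=\tfrac32$, $\gamma_{ii}=\tfrac12$, $\gamma_{ij}=\tfrac14$, $\epsilon=\tfrac14$, and any $\psi\ge 0$. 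Whenever \eqref{eq:Razumikhin} holds for agent $i$ at time $k$, we have $|x_{j,k-1}|\le\tfrac32|x_{i,k}|$, so $x_{i,k+1}=\tfrac12x_{i,k}$ and \eqref{eq:decremental condition} holds for every state. Yet from $x_{i,-1}=x_{i,0}=1$ the trajectory is $1,\tfrac12,2,1,4,2,8,\dots$, i.e.\ $x_{i,2m}=2^m$. The condition holds at even times and fails at odd times, which is exactly the unguarded step above. The paper's own proof does not resolve this either. Its Case~1 passes from a failure at a single agent to $V_{\max}(k)\le\frac1p\max_s V_{\max}(k-s)$ by ``taking the maximum over all $i$''. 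Its Case~3 then combines the cases via an unspecified comparison principle, so $V_i(x_{i,k+1})$ is never controlled when the condition fails at $(i,k)$. A correct statement needs an extra hypothesis covering that step, e.g.\ an unconditional max-type bound $V_i(x_{i,k+1})\le\rho\max_{j\in\mathcal{E}_i\cup\{i\},\,s\in\{0,\dots,\tau_{\max}\}}V_j(x_{j,k-s})+\psi\|d_{i,k}\|_2$ with $\rho<1$. Under such a hypothesis your envelope argument with $W_k$ goes through.
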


\begin{remark}\label{rmk:explicit_beta_gamma}
Based on Theorem~\ref{thm:lrf-siss} and its proof, a valid choice of $\beta$ and $\mu$ in Definition~\ref{def: siss} is:
\begin{align}
\beta(r,k)=\alpha_1^{-1}\!\big(c\,\rho^{k}\alpha_2(r)\big),\qquad
\mu(r)=\alpha_1^{-1}\!\Big(\frac{\psi}{\epsilon}r\Big),
\end{align}
where $\rho = \max\!\left\{ \exp^{-\tfrac{\ln p}{\tau_{\max}+1}}, \; 1-\epsilon \right\} \in (0,1), c \ge p$
\end{remark}
The sufficient condition in Theorem~\ref{thm:lrf-siss} provides a practical way to verify the discrete-time sISS property for the delayed interconnected system in Eq.~\eqref{eq: system}. Specifically, it requires constructing a family of local Lyapunov functions $\{V_i\}_{i\in\mathcal{N}}$ that satisfy the class-$\mathcal{K}_\infty$ bounds and the Razumikhin-type conditional decrement in Theorem~\ref{thm:lrf-siss}. In the decrement inequality, the coupling coefficient $\gamma_{i,j}$ quantifies the influence of subsystem $j$ on subsystem $i$, while the small-gain constraint enforces a network-size-independent contraction. However, these coupled constraints and the delay-dependent Razumikhin condition make the manual design of $\{V_i\}$ challenging. Hence, we next propose a learning-based approach to synthesize and verify such Lyapunov-Razumikhin certificates.

\section{Synthesis and Verification Framework}
In this section, we present the proposed framework for formal synthesis and verification of discrete-time sISS certificates in delayed interconnected systems. Section~\ref{subsec: synthesis} introduces the synthesis module, which trains the neural certificates and controllers. Section~\ref{subsec: verification} formulates verification as a global logical condition over reachability-constrained delay domains and adopts a two-stage verification strategy over a prescribed level set. Section~\ref{subsec: Scalability} presents scalable algorithms that leverage structural properties of specific classes of interconnected systems to keep training and verification tractable.

\subsection{Synthesis of Neural Certificates}
\label{subsec: synthesis}
We synthesize neural sISS certificates by jointly searching for the coupling matrix and Lyapunov functions satisfying conditions in Theorem~\ref{thm:lrf-siss}. To this end, we parameterize the coupling matrix $\Gamma$ by combining a learnable matrix $\Gamma_{\text{pure}}$ and adjacency matrix $G$ as $\Gamma = \mathcal{Q}\Big(\operatorname{ReLU}(\Gamma_{\text{pure}}) \circ G\Big)$, where the elementwise product $\circ$ ensures $\gamma_{i,j} = 0$ when $G_{i,j} = 0$, and the ReLU activation guarantees $\gamma_{i,j} \geq 0$. Function $\mathcal{Q}$ ensures the small-gain condition by converting any $\gamma_{i,j}$ to  $\min\left\{1,\frac{1-\varepsilon}{\sum_{j\in \mathcal E_i \cup \{i\}} \gamma_{i,j}}\right\}\gamma_{i,j}, \quad j \in \mathcal E_i \cup \{i\}$. The matrix $\Gamma_{\text{pure}}$ is then trained together with Lyapunov functions. 

To enforce Eqs.~\eqref{eq:class_K_bounds}-\eqref{eq:decremental condition}, we jointly train Lyapunov functions and $\Gamma$ as:
\begin{align}
&L_{\text{total}}(\Gamma, V) = w_{\text{imi}}L_{\text{imi}}+w_p L_p + w_d L_d,\label{eq:loss}\\
&L_{\text{imi}}=\frac{1}{|\mathcal{N}|}\sum_{i\in\mathcal{N}}\|\pi_i-\pi_{i,\text{nom}}\|_2\\
&L_p = \frac{1}{|\mathcal{N}|} \sum_{i\in\mathcal{N}}
\Big( \mathrm{ReLU}(\alpha_1(|x_{i,k}|_2) - V_i(x_{i,k}) + \epsilon_p)\notag
\\&+ \mathrm{ReLU}(-\alpha_2(|x_{i,k}|_2) + V_i(x_{i,k}) + \epsilon_p) \Big),\\
&L_d = \frac{1}{|\mathcal{N}|} \sum_{i\in\mathcal{N}}
\mathrm{ReLU}\big(-\max(\text{con}_{\neg A}, \text{con}_B) + \epsilon_d\big),
\end{align}
where $w_{\text{imi}},w_p, w_d > 0$ are weighting factors. The first term $L_{\text{imi}}$ is to keep the learned controller close to a well-performing nominal policy $\pi_{i,\text{nom}}$. The second term $L_p$ enforces the Class-$\mathcal{K}_\infty$ bounds, with $\alpha_1$ and $\alpha_2$ been chosen as linear functions with positive slopes and the margin $\epsilon_p\geq 0$. The third term $L_d$ enforces the Razumikhin conditional decrement condition, with
\begin{align}
&\text{con}_{\neg A} =  \max_{\substack{j \in \mathcal{E}_i\cup\{i\} \\ s \in \{1, \dots, \tau_{\max}\}}} V_j(x_{j,k-s})-p \cdot V_i(x_{i,k}), \\
&\text{con}_B = \sum_{j \in \mathcal{E}_i \cup \{i\}} \gamma_{i,j} V_j(x_{j,k}) + \psi\|d_{i,k}\|_2  - V_i(x_{i,k+1}),
\end{align}
where $\epsilon_d \geq 0$ is a margin. This reformulation is valid since
$A \Rightarrow B \Longleftrightarrow \neg A \lor B$.

\begin{remark}
To construct the training set, we sample initial states from a prescribed initial state space and roll out a fixed number of trajectories using the nominal controller, each with the same horizon length.
For time indices prior to the sampled initial state (i.e., the delayed history required by $\tau_{\max}$), we pad the trajectory by setting the unavailable past states to the equilibrium state.
\end{remark}

\subsection{Verification Formulation}
\label{subsec: verification}
This subsection formalizes the verification task for a candidate neural certificate $(\Gamma,V)$ learned in Section~\ref{subsec: synthesis}. Given the learned Lyapunov-Razumikhin functions $\{V_i\}_{i\in\mathcal N}$ and coupling gains $\Gamma=\{\gamma_{i,j}\}$, we aim to certify that the Razumikhin conditional logic in Theorem~\ref{thm:lrf-siss} holds for all admissible delayed state histories within a prescribed region of interest. 

First, we give the logical condition corresponding to the
Razumikhin conditional decrement in Theorem~\ref{thm:lrf-siss}. Specifically,
for each agent, at least one of the following two expressions must hold:
\begin{align}
&\bigwedge_{i \in \mathcal{N}}\left( \left(\max_{\substack{j \in \mathcal{E}_i\cup\{i\} \\ s \in \{1, \dots, \tau_{\max}\}}}  V_j(x_{j,k-s}) > p \cdot V_i(x_{i,k})\right) \lor \right.\notag\\
&\left.\left(V_i(x_{i,k+1}) \le \sum_{j \in \mathcal{E}_i \cup \{i\}} \gamma_{i,j} V_j(x_{j,k}) + \psi\|d_{i,k}\|_2\right) \right)
\label{eq:original_verification}
\end{align}
where $\lor$ denotes the logical OR operator, $\wedge$ denotes the logical AND operator. For ease of verification, we rewrite the implication form of the Razumikhin logic as an equivalent disjunction using $A \Rightarrow B \Longleftrightarrow \neg A \lor B$. Together with the class-$\mathcal{K}_\infty$ bounds in Eq.~\eqref{eq:class_K_bounds}, this logical condition
forms the certificate requirements in Theorem~\ref{thm:lrf-siss}.

Direct verification over the entire state space is intractable.
To address this, we restrict attention to delay histories that are reachable from a given initial delayed-state set $S_0$ over a finite horizon.
Here $S_k \subset \mathbb{R}^{(\tau_{\max}+1)\sum_{i\in\mathcal N} n_i}$ denotes an over-approximation of the reachable set of the global delay-history state $\mathrm{col}(x_k,\ldots,x_{k-\tau_{\max}})$ at time step $k$.
Let $\mathrm{Reach}(f,S)$ denote a one-step reachability operator for the corresponding one-step (delay-embedded) dynamics of $\mathrm{col}(x_k,\ldots,x_{k-\tau_{\max}})$.
Given $S_0$ and a horizon $T\in\mathbb N$, define recursively the over-approximated reachable sets for $k=1,\dots,T$ by
\begin{align}
    S_k := \mathrm{Reach}(f,S_{k-1}).
    \label{eq:reachable set}
\end{align}
For each $k\in\{0,\dots,T\}$ and each $i\in\mathcal N$, define the
reachability-constrained delay domain
\begin{align}
&\mathcal{Z}^*_{i,k}
:=
\Big\{
\left\{\{x_{j,k-s}\}_{s=0}^{\tau_{\max}}\right\}_{j\in\mathcal E_i\cup\{i\}}
:
x_{j,k-s}\in
\mathrm{Proj}_{j,s}(S_{k}),\notag\\
&\forall j\in\mathcal E_i\cup\{i\},\ s=0,\dots,\tau_{\max}
\Big\},
\label{eq:reachable domain}
\end{align}
where $\mathrm{Proj}_{j,s}(\cdot)$ denotes the projection of a global delay-history set onto the state coordinates of agent~$j$ at lag~$s$.
This construction yields a verification domain that is both system-consistent and horizon-dependent. 
\begin{remark}
The reachability operator $\mathrm{Reach}(\cdot)$ in~\eqref{eq:reachable set} can be instantiated using sampling-based methods, which are suitable when the closed-loop dynamics are a black box.
A broad class of approaches generates rollouts from the initial set and constructs an outer approximation of the reachable set with theoretical guarantees such as~\citet{ramdani2009hybrid,liebenwein2018sampling,zhang2023reachability}.
\end{remark}

\begin{figure}[t]
    \centering
    \includegraphics[width=\linewidth]{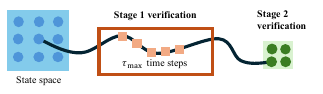}
    \caption{Two-stage verification framework.}
    \label{fig:two_stage_framework}
\end{figure}
To further improve tractability, we adopt a two-stage strategy (as in Fig.~\ref{fig:two_stage_framework}) that separates verification into an ``outside'' region, where the Razumikhin conditional logic is enforced on reachable delay histories, and an ``inside'' region, where a strengthened unconditional contraction is verified on a target level set. This leads to an sISS sufficient condition written in Theorem~\ref{thm:reach_constrained_siss}.
\begin{theorem}[Two-stage verification]
\label{thm:reach_constrained_siss}
Fix $R \ge \frac{\psi}{\epsilon}\,\bar d$, where $\bar d:=\sup_{k\ge 0}\max_{i\in\mathcal{N}}\|d_{i,k}\|_2$.
Let $T_R:=\left\lceil \frac{\ln\!\big(R/(cV_{\max}(0))\big)}{\ln\rho}\right\rceil$,
where $c\ge p$ and $\rho\in(0,1)$ are the constants given by Remark~\ref{rmk:explicit_beta_gamma}, and
$V_{\max}(k)=\max_{i\in\mathcal{N}}V_i(x_{i,k})$.
Assume that:
(i) for all $k$ with $V_{\max}(k)>R$, the Razumikhin-based conditional logic is verified on the reachability-constrained domain $\mathcal Z^*_{i,k}$;
(ii) for all states with $V_{\max}\le R$, Eqs.~\eqref{eq:class_K_bounds} and \eqref{eq:decremental condition} are verified on the entire set for agent $i$ as
\begin{align}
\mathcal{S}_{R,i}
:=
\mathrm{Proj}_{\mathcal{E}_i\cup\{i\}}\!\big(\{x:\,V_{\max}(x)\le R\}\big),
\end{align}
where $\mathrm{Proj}_{\mathcal{E}_i\cup\{i\}}(\cdot)$ denotes the projection of a global set onto the stacked state coordinates $\mathrm{col}(\{x_j\}_{j\in\mathcal{E}_i\cup\{i\}})$.
Then, any trajectory with $\mathrm{col}(x_0,\ldots,x_{-\tau_{\max}})\in S_{0}$ enters $\{V_{\max}\le R\}$ by time $T_R$ and satisfies $V_{\max}(k)\le R$ for all $k\ge T_R$.
\end{theorem}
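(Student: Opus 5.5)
The plan is to reuse the Razumikhin comparison argument behind Theorem~\ref{thm:lrf-siss} and Remark~\ref{rmk:explicit_beta_gamma}. Two things change: the decrement is now certified only where the trajectory actually lives, namely on $\mathcal Z^*_{i,k}$ while $V_{\max}(k)>R$, and the set $\{V_{\max}\le R\}$ has to be shown forward invariant.

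\emph{Step 1: the trajectory stays in the certified domain.} By induction on $k$, we have $\mathrm{col}(x_k,\ldots,x_{k-\tau_{\max}})\in S_k$, since $S_0$ contains the initial history and $S_k=\mathrm{Reach}(f,S_{k-1})$ over-approximates the one-step delay-embedded map. Projecting gives $\{\{x_{j,k-s}\}_s\}_{j\in\mathcal E_i\cup\{i\}}\in\mathcal Z^*_{i,k}$ for every $i$. Hence, whenever $V_{\max}(k)>R$, assumption (i) guarantees for each agent that either the Razumikhin premise \eqref{eq:Razumikhin} fails or the decrement \eqref{eq:decremental condition} holds along the actual trajectory. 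This is exactly the pointwise hypothesis (ii) of Theorem~\ref{thm:lrf-siss}, evaluated on this trajectory.

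\emph{Step 2: decay outside the level set.} Define the history maximum $W(k):=\max_{i}\max_{s\in\{0,\dots,\tau_{\max}\}}V_i(x_{i,k-s})$. We repeat the proof of Theorem~\ref{thm:lrf-siss} along the trajectory with the aim of proving $V_{\max}(k)\le c\rho^k W(0)+\frac{\psi}{\epsilon}\bar d$ for as long as $V_{\max}$ stays above $R$. The case split is as follows.
\begin{itemize}
\item If the Razumikhin premise holds for agent $i$ at time $k$, then \eqref{eq:decremental condition} together with the small-gain bound \eqref{eq:small gain} gives $V_i(x_{i,k+1})\le(1-\epsilon)\max_j V_j(x_{j,k})+\psi\bar d$.
\item If the premise fails, then $V_i(x_{i,k})<p^{-1}\max_{j,s}V_j(x_{j,k-s})$, which is a geometric drop of $V_i$ relative to recent history.
\end{itemize}
Working over windows of length $\tau_{\max}+1$ turns these two cases into the rate $\rho=\max\{p^{-1/(\tau_{\max}+1)},1-\epsilon\}$. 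The inhomogeneous term then sums to at most $\psi\bar d/\epsilon\le R$.

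It then remains to solve for the first entry time. From $c\rho^kV_{\max}(0)\le R$ we get $k\ge\ln(R/(cV_{\max}(0)))/\ln\rho$, which gives $T_R$. Here I would identify $V_{\max}(0)$ with the history maximum, as the paper's constants implicitly do. One subtlety needs care: the bound is only needed until the first entry. Either the trajectory enters earlier, or the decay estimate holds on all of $[0,T_R]$ and forces entry at $T_R$. Because the disturbance term is strictly bounded by $R$, the estimate does force $V_{\max}\le R$, although boundary cases where $R=\psi\bar d/\epsilon$ exactly need an asymptotic limiting argument.

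\emph{Step 3: invariance.} Once $V_{\max}(k)\le R$, we have $x_k\in\{V_{\max}\le R\}$, so each neighborhood state lies in $\mathcal S_{R,i}$. Assumption (ii) gives the decrement unconditionally, so
\[
V_i(x_{i,k+1})\le(1-\epsilon)V_{\max}(k)+\psi\bar d\le(1-\epsilon)R+\epsilon R=R,
\]
using $R\ge\psi\bar d/\epsilon$. Induction then gives $V_{\max}(k)\le R$ for all $k$ after entry, and in particular for all $k\ge T_R$.

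The main obstacle is Step 2. The Razumikhin argument involves delayed values, which can sit above $R$ after entry, and it has to be applied only on the reachable domain. I would handle this by running the decay argument solely on the pre-entry interval $[0,T_R]$, where (i) is available at every time step. I would also check that the windowed max-bound of the original proof never needs the decrement at times when $V_{\max}\le R$ but the history has not yet settled. If it does, Step 3 covers those times instead, since (ii) there is unconditional.
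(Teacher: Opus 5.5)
Your skeleton is the paper's: the trajectory stays in the $S_k$, decay holds outside the level set via the argument of Theorem~\ref{thm:lrf-siss}, you solve for $T_R$, and $\{V_{\max}\le R\}$ is forward invariant by (ii) and $R\ge\psi\bar d/\epsilon$. Steps~1 and~3 match it. The gap is in Step~2, in two places.

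\textbf{The disturbance term and $T_R$.} Your own estimate $V_{\max}(k)\le c\rho^kW(0)+\frac{\psi}{\epsilon}\bar d$ does not give entry by the stated $T_R$. At $k=T_R$ it only gives $V_{\max}(T_R)\le R+\frac{\psi}{\epsilon}\bar d$, which can be as large as $2R$. Your claim that the disturbance term ``forces'' $V_{\max}\le R$ is false, and that term need not be strictly below $R$, since $R=\psi\bar d/\epsilon$ is allowed. An asymptotic limiting argument cannot produce a finite entry time. In addition, your bound involves $W(0)\ge V_{\max}(0)$, while $T_R$ is defined with $V_{\max}(0)$.

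The paper reaches $T_R$ only by writing the outside bound as $V_{\max}(k)\le c\rho^kV_{\max}(0)$, i.e.\ by dropping the disturbance term when it invokes Theorem~\ref{thm:lrf-siss}. That is consistent only with $\bar d=0$, which is the setting of the experiments. For $\bar d>0$ the conclusion genuinely fails. Take $x_{k+1}=(1-\epsilon)x_k+\psi\bar d$ with $V=|x|$, $R=\psi\bar d/\epsilon$ and $x_0>R$. Every hypothesis holds, yet $x_k-R=(1-\epsilon)^k(x_0-R)>0$ for all $k$. So you must either assume $\bar d=0$, or define the entry time with $R-\psi\bar d/\epsilon$ in place of $R$, which needs $R>\psi\bar d/\epsilon$ strictly.

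\textbf{The window argument.} The step ``working over windows of length $\tau_{\max}+1$'' is not supplied, and with the premise as stated it cannot be. When the premise fails at time $k$, you learn only $V_i(x_{i,k})<p^{-1}\max_{j,s}V_j(x_{j,k-s})$. This says nothing about $V_i(x_{i,k+1})$, so the two cases do not combine into any rate. The paper does not close this either: it cites Theorem~\ref{thm:lrf-siss}, whose Case~3 appeals to an unspecified comparison principle at exactly this point.

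Here is a concrete obstruction with $d\equiv0$. Take two scalar agents coupled with delay $1$, $V_i=|x_i|$, $p=1.01$, $\gamma_{ii}=0.89$, $\gamma_{ij}=0.01$, and $R=1/2$. For $j\neq i$, let $x_{i,k+1}=0.89\,x_{i,k}$ if $|x_{j,k-1}|\le p|x_{i,k}|$ or $|x_{i,k}|\le R$, and $x_{i,k+1}=2x_{j,k-1}$ otherwise. The premise forces the first branch, so (i) and (ii) hold everywhere. Yet from the all-ones history one gets $x_{i,2m}=2^m$ and $x_{i,2m+1}=0.89\cdot 2^m$, so the trajectory never enters $\{V_{\max}\le R\}$. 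This example refutes Theorem~\ref{thm:lrf-siss} as well.

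To make Step~2 work you need one of two things. One option is a premise on the next value: require the decrement whenever $\max_{j,\,s\in\{0,\dots,\tau_{\max}\}}V_j(x_{j,k-s})\le p\,V_i(x_{i,k+1})$. For $\bar d=0$ this gives $V_{\max}(k+1)\le\max\{p^{-1},1-\epsilon\}\,W(k)$, and hence geometric decay of $W$ over windows of length $\tau_{\max}+1$. The other option is to take the decay bound of Theorem~\ref{thm:lrf-siss} as an explicit hypothesis rather than derive it.
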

The detailed proof of Theorem~\ref{thm:reach_constrained_siss} is given in Appendix~\ref{app: proof_2}

Finally, we derive a sampling-based sufficient condition for Theorem~\ref{thm:reach_constrained_siss} to make verification tractable. Specifically, we reduce the universal verification requirements to finitely many checks while retaining correctness for all continuous states. To achieve this, we introduce the following assumptions for (i) the Lipschitz continuity of $f_i$ and $V_i$ and (ii) a grid-based sampling.
\begin{assumption}[Lipschitz Continuity]
\label{asmp:Lipschitz}
For each agent $i \in \mathcal{N}$, the system dynamics $f_i$ and the Lyapunov-Razumikhin functions $V_i$ are Lipschitz continuous with constants $L_{f_i}$ and $L_{V_i}$, respectively. 
\end{assumption}

\begin{definition}[Uniform Grid Sampling of Local Delay-History Domains]
\label{def:Approximation}
Fix an agent $i\in\mathcal N$ and a time index $k$.
Let the local delay-history variable
\begin{align}
z_{i,k}
:=
\Big(\{x_{j,k-s}\}_{j\in\mathcal E_i\cup\{i\},\, s=0,\dots,\tau_{\max}},\, d_{i,k}\Big)
\in \mathbb R^{m^{\text{out}}_i}
\end{align}
lie in a bounded region of interest
$\mathcal Z^\ast_{i,k}\subset\mathbb R^{m^{\text{out}}_i}$, where
$m_i^{\text{out}} := (\tau_{\max}+1)\sum_{j\in\mathcal E_i\cup\{i\}} n_j + n_d$ and $n_d$ is the disturbance dimension.

Construct a finite dataset $\mathcal D_{i,k}\subset \mathcal Z^\ast_{i,k}$ by uniform gridding:
for each coordinate $q=1,\dots,m^{\text{out}}_i$, discretize the $q$-th coordinate interval of
$\mathcal Z^\ast_{i,k}$ with step size $\Delta_{i,k}^{(q)} >0$.
Denote $\Delta_{i,k}:=(\Delta_{i,k}^{(1)},\dots,\Delta_{i,k}^{(m_i^{\text{out}})})$.

For any $z_{i,k}\in\mathcal Z^\ast_{i,k}$, let $\hat{z}_{i,k}=\Pi_{\mathcal D_{i,k}}(z_{i,k})$ be a nearest grid point in $\mathcal D_{i,k}$.
Then
\begin{align}
\|z_{i,k}-\hat z_{i,k}\|_2 \le \tfrac12 \|\Delta_{i,k}\|_2 .
\label{eq:grid_cover}
\end{align}
\end{definition}

\begin{definition}[Uniform Grid Sampling on $\mathcal S_{R,i}$]
\label{def:grid_inside_SR}
Fix an agent $i\in\mathcal N$ and define
$z_i := (\{x_j\}_{j\in\mathcal E_i\cup\{i\}},\, d_i)\in\mathbb R^{m_i^{\text{in}}}$ with $m_i^{\text{in}} := \sum_{j\in\mathcal E_i\cup\{i\}} n_j + n_d$.
Construct a uniform grid dataset $\mathcal D_i^{\mathrm{in}}\subset \mathcal S_{R,i}$ with per-coordinate step sizes
$\Delta^{\text{in}}_i=(\Delta_i^{(1)},\dots,\Delta_i^{(m_i^{\text{in}})})$.
For any $z_i\in\mathcal S_{R,i}$, let $\hat z_i\in\mathcal D_i^{\mathrm{in}}$ (for notational simplicity, we drop $k$ when the domain is time-invariant) be a nearest grid point. Then
\begin{align}
\|z_i-\hat z_i\|_2 \le \tfrac12 \|\Delta^{\text{in}}_i\|_2 .
\end{align}
\end{definition}

\begin{remark}\label{rmk:lipschitz}
Assumption~\ref{asmp:Lipschitz} and Definitions~\ref{def:Approximation}-\ref{def:grid_inside_SR} are mild in our setting and hold for a broad class of smooth physical systems on compact operating domains~\citep{nejati2023formal}. We next clarify how the associated constants are obtained in practice.

\noindent(i) For each agent $i\in\mathcal N$, the Lipschitz constant $L_{f_i}$ in Assumption~\ref{asmp:Lipschitz} can be derived a priori from the analytic model on the compact operating set, e.g., by upper bounding the Jacobian norm of $f_i$, and is treated as fixed. The constants $L_{V_i}$ are estimated offline using neural Lipschitz bounding methods such as~\citep{xu2024eclipse}, which provide sound upper bounds for feed-forward networks over a prescribed domain.

\noindent(ii) For Definitions~\ref{def:Approximation} and~\ref{def:grid_inside_SR}, the grid resolution $\Delta_{i,k},\Delta_{i}^{\text{in}}$ controls the covering granularity of the sampled dataset $\mathcal D_{i,k},\mathcal{D}_i^{\text{in}}$. For example, for any $z_{i,k}\in\mathcal Z_{i,k}$ there exists a grid point $\hat z_{i,k}\in\mathcal D_{i,k}$ whose component-wise distance to $z_{i,k}$ is bounded by $\Delta_{i,k}/2$. Combined with the Lipschitz constants in Assumption~\ref{asmp:Lipschitz}, this yields a deterministic margin that transfers the sampled inequalities to all continuous points in $\mathcal Z^*_{i,k}$.
\end{remark}

Moreover, the class-$\mathcal{K}_{\infty}$ bounds condition~\eqref{eq:class_K_bounds} only depends on the state of a single agent without delay, and can therefore be verified directly over the continuous domain using neural network verification tools such as Marabou~\citep{wu2024marabou}. Under the above assumptions, the infinite-dimensional verification problem over continuous delay histories can be reduced to finite checks on sampled points with appropriate margins. This yields the following Theorem.

\begin{theorem}[Formal Verification of Vector Lyapunov-Razumikhin Conditions]
\label{thm:sampled_delayed_siss}
Consider the delayed interconnected system under Assumption~\ref{asmp:Lipschitz} and Definitions~\ref{def:Approximation}-\ref{def:grid_inside_SR} and the class-$\mathcal{K}_{\infty}$ condition \eqref{eq:class_K_bounds} is satisfied.
Fix $R \ge \frac{\psi}{\epsilon}\,\bar d$ with $\bar d:=\sup_{k\ge 0}\max_{i\in\mathcal{N}}\|d_{i,k}\|_2$. For each agent $i\in\mathcal N$ and time index $k$, let $\mathcal Z^\ast_{i,k}$ be the reachability-constrained local delay-history domain and define the inside set $\mathcal{S}_{R,i}$ as in Theorem~\ref{thm:reach_constrained_siss}.
Let $\mathcal D^{\mathrm{out}}_{i}\subset \mathcal Z^\ast_{i,k}\setminus\mathcal S_{R,i}$ and
$\mathcal D^{\mathrm{in}}_{i}\subset \mathcal S_{R,i}$ be uniform-grid datasets. Let $\varepsilon^{\text{out}}_{i,k}:=\tfrac12\|\Delta_{i,k}\|_2$, $\varepsilon^{\text{in}}_{i,k}:=\tfrac12\|\Delta^{\text{in}}_{i}\|_2$.

Suppose that the following sampled conditions hold:

\noindent\textbf{Stage 1: outside.} For all sampled points $\hat z_{i,k}\in\mathcal D^{\mathrm{out}}_{i,k}$,
\begin{align}
&\left(pV_i(x_{i,k})-\max_{\substack{j\in\mathcal E_i\cup\{i\}\\ s\in\{1,\dots,\tau_{\max}\}}}V_j(x_{j,k-s})
< -L_{h_i,\text{upper}}\varepsilon_{i,k}^{\text{out}}\right)\lor\notag\\
&
\left(\begin{aligned}[t]
V_i(x_{i,k+1})
&\le
\sum_{j\in\mathcal E_i\cup\{i\}}\gamma_{ij}V_j(x_{j,k})
+\psi\|d_{i,k}\|_2-\delta^{\mathrm{out}}_{i,k}
\end{aligned}\right),
\label{eq:sampled_logic_out_compact}
\end{align}
and the margin satisfies $\delta^{\mathrm{out}}_{i,k}\ge L_{r_i,\text{upper}}\varepsilon^{\text{out}}_{i,k}$, where $L_{r_i,\text{upper}}=L_{V_i}L_{f_i}+\sum_{j\in\mathcal E_i\cup\{i\}}|\gamma_{ij}|\,L_{V_j}+\psi$, $L_{h_i,\text{upper}}=p\,L_{V_i}+\max_{j\in\mathcal E_i\cup\{i\}} L_{V_j}$.

\noindent\textbf{Stage 2: inside.} For all sampled points $\hat z_{i}\in\mathcal D^{\mathrm{in}}_{i}$,
\begin{align}
V_i(x_{i,k+1})
\le
\sum_{j\in\mathcal E_i\cup\{i\}}\gamma_{ij}V_j(x_{j,k})
+\psi\|d_{i,k}\|_2
-\delta^{\mathrm{in}}_{i},
\label{eq:sampled_in_compact}
\end{align}
and the margin satisfies $\delta^{\mathrm{in}}_{i}\ge L_{r_i,\text{upper}}\epsilon^{\text{in}}_{i}$.

Then, the two-stage delayed-sISS conditions in Theorem~\ref{thm:reach_constrained_siss} hold on the corresponding continuous domains. Hence, the trajectory reaches $\{V_{\max}\le R\}$ in finite time and stays there thereafter.
\end{theorem}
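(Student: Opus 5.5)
The plan is to reduce the statement to Theorem~\ref{thm:reach_constrained_siss}. Its conclusion (finite-time entry into $\{V_{\max}\le R\}$ and invariance thereafter) follows once hypotheses (i) and (ii) of that theorem are shown to hold on the continuous domains $\mathcal Z^\ast_{i,k}\setminus\mathcal S_{R,i}$ and $\mathcal S_{R,i}$. The class-$\mathcal K_\infty$ part of (ii) is assumed. What remains is to transfer the sampled inequalities \eqref{eq:sampled_logic_out_compact} and \eqref{eq:sampled_in_compact} from grid points to every continuous point. This is a standard Lipschitz-margin argument, applied separately to two scalar residual functions of the local variable $z_{i,k}$.

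First, define the decrement residual
\begin{align}
r_i(z_{i,k}) := V_i\big(f_i(\cdot)\big)-\sum_{j\in\mathcal E_i\cup\{i\}}\gamma_{ij}V_j(x_{j,k})-\psi\|d_{i,k}\|_2 ,
\end{align}
where $f_i(\cdot)$ denotes the closed-loop next state. Also define the Razumikhin residual
\begin{align}
h_i(z_{i,k}) := pV_i(x_{i,k})-\max_{j,s}V_j(x_{j,k-s}).
\end{align}
By Assumption~\ref{asmp:Lipschitz}, composition, the triangle inequality, and the fact that a max of Lipschitz functions is Lipschitz with the largest constant, I would show $|r_i(z)-r_i(z')|\le L_{r_i,\text{upper}}\|z-z'\|_2$ and $|h_i(z)-h_i(z')|\le L_{h_i,\text{upper}}\|z-z'\|_2$. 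Each coordinate block is bounded by the full Euclidean norm, and $\|d\|_2$ is $1$-Lipschitz, which gives exactly the stated constants. One subtlety should be flagged here: $L_{f_i}$ must be understood as the Lipschitz constant of the closed-loop map $z\mapsto f_i(x_i,\pi_i(\cdot),\cdot,d_i)$, so that the controller is absorbed. I would state this as the intended reading of Assumption~\ref{asmp:Lipschitz}.

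Stage 1 (outside). Take any continuous $z_{i,k}\in\mathcal Z^\ast_{i,k}\setminus\mathcal S_{R,i}$, and let $\hat z_{i,k}$ be its nearest grid point, with $\|z-\hat z\|\le\varepsilon^{\text{out}}_{i,k}$ by \eqref{eq:grid_cover}. At $\hat z$, one of the two disjuncts holds.
\begin{itemize}
\item If $h_i(\hat z)<-L_{h_i,\text{upper}}\varepsilon^{\text{out}}_{i,k}$, then $h_i(z)<0$. So the Razumikhin premise \eqref{eq:Razumikhin} fails at $z$, and the implication holds vacuously.
\item Otherwise $r_i(\hat z)\le-\delta^{\text{out}}_{i,k}$, which gives $r_i(z)\le -\delta^{\text{out}}_{i,k}+L_{r_i,\text{upper}}\varepsilon^{\text{out}}_{i,k}\le 0$. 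So \eqref{eq:decremental condition} holds at $z$.
\end{itemize}
Either way, the disjunction \eqref{eq:original_verification} holds at $z$ for every $i$. This is hypothesis (i) of Theorem~\ref{thm:reach_constrained_siss}. Stage 2 is identical but simpler: $r_i(\hat z_i)\le-\delta^{\text{in}}_i$ and $\delta^{\text{in}}_i\ge L_{r_i,\text{upper}}\varepsilon^{\text{in}}_i$ give $r_i\le 0$ on all of $\mathcal S_{R,i}$. That is the unconditional decrement required in (ii).

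The main obstacle is the covering step rather than the Lipschitz algebra. The nearest grid point of an outside point must lie in $\mathcal D^{\text{out}}_i$, not in the inside set or outside $\mathcal Z^\ast_{i,k}$. Likewise, grid points of $\mathcal S_{R,i}$ must cover that nonrectangular set within $\varepsilon^{\text{in}}_i$. I would handle this by interpreting Definitions~\ref{def:Approximation}--\ref{def:grid_inside_SR} as covering statements: every point of the respective continuous domain has a sampled point within half the grid diagonal. One can grid a bounding box and retain all points whose $\varepsilon$-cell meets the domain, evaluating the residuals there, which only requires the Lipschitz bounds to hold on a slight enlargement of the domain. I would also need to note that Theorem~\ref{thm:reach_constrained_siss}(i) only requires the logic on reachable histories with $V_{\max}(k)>R$. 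These lie in $\mathcal Z^\ast_{i,k}$ by construction of $S_k$. For the local agent, such a history lies either outside $\mathcal S_{R,i}$ or inside it, and in the latter case Stage 2 already gives the stronger unconditional decrement. Hence the two stages jointly cover every case, and Theorem~\ref{thm:reach_constrained_siss} yields the conclusion.
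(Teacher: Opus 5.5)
Your proposal matches the paper's proof: the same residuals $h_i$ and $r_i$, the same Lipschitz constants $L_{h_i,\text{upper}}$ and $L_{r_i,\text{upper}}$, and the same nearest-grid-point margin argument for both stages, followed by an appeal to Theorem~\ref{thm:reach_constrained_siss}. Your case split on the sampled disjunction is equivalent to the paper's rewriting of it as an implication, and the caveats you raise (absorbing $\pi_i$ into $L_{f_i}$, covering near the boundary of $\mathcal S_{R,i}$, and using Stage 2 for reachable histories whose local state lies in $\mathcal S_{R,i}$) are points the paper leaves implicit rather than departures from its argument.
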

The detailed proof of Theorem~\ref{thm:sampled_delayed_siss} is given in Appendix~\ref{app: proof_3}. If a candidate certificate $(\Gamma,\{V_i\}_{i\in\mathcal{N}})$ fails the above verification, we can refine it via a counterexample-guided inductive synthesis (CEGIS) loop. Specifically, the verifier returns violating samples from the reachable delay domains, which are then added to the training set to re-synthesize $(\Gamma,\{V_i\}_{i\in\mathcal{N}})$.
This CEGIS loop is repeated until the certificate is verified.
Algorithmic details are provided in Appendix~\ref{app:cegis loop}.

\subsection{Scalability Analysis}
\label{subsec: Scalability}
The proposed framework for synthesizing neural certificates can be computationally challenging for large-scale interconnected systems. To address this, we develop a scalable synthesis framework that accelerates certificate construction by reusing certificates learned for smaller or structurally equivalent systems.

Theorem~\ref{thm:Certifiable_Equivalence_delay} shows that, under the structural conditions in Definition~\ref{dfn:stru_eq_delay}, sISS certificates for a larger delayed interconnected system can be obtained by reusing certificates from a smaller system.
Intuitively, this requires that every substructure of $\tilde{\mathcal I}$ is embedded in $\mathcal I$ via the injective map.

\begin{definition}[Substructure Isomorphism for Delayed Systems]
\label{dfn:stru_eq_delay}
A system with delays
\(\tilde{\mathcal{I}}
=
(\widetilde{\mathcal{N}},
\{\widetilde{\mathcal{E}}_j\}_{j\in\widetilde{\mathcal{N}}},
\{\widetilde{f}_j\}_{j\in\widetilde{\mathcal{N}}})\)
is substructure-isomorphic to
\(\mathcal{I}
=
(\mathcal{N},
\{\mathcal{E}_i\}_{i\in\mathcal{N}},
\{f_i\}_{i\in\mathcal{N}})\),
if, for each \(j \in \widetilde{\mathcal{N}}\), there exists a map
$\iota_j:\{j\}\cup\widetilde{\mathcal{E}}_j \rightarrow \mathcal{N}$
satisfying:
$
\widetilde{f}_j = f_{\iota_j(j)},
\iota_j(\widetilde{\mathcal{E}}_j)
=
\mathcal{E}_{\iota_j(j)}$,
and, for each \(l \in \widetilde{\mathcal{E}}_j\), the delay
\(\tilde{s}_{jl}\) is identical to the corresponding delay, i.e.,
$
\tilde{s}_{jl}
=
s_{\iota_j(j)\iota_j(l)}$.
\end{definition}

\begin{theorem}[sISS Preservation under Substructure Isomorphism for Delayed Systems]
\label{thm:Certifiable_Equivalence_delay}
Suppose an interconnected system with delays $\mathcal{I}$ admits a vector
Lyapunov--Razumikhin function $\{V_i\}_{i\in\mathcal{N}}$ as its sISS
certificate. Suppose $\tilde{\mathcal{I}}$ is substructure-isomorphic to
$\mathcal{I}$ in the sense of Definition~\ref{dfn:stru_eq_delay}. Further
suppose that, for any $j,j'\in\widetilde{\mathcal N}$ and any
$l\in(\widetilde{\mathcal E}_j\cup\{j\})\cap
(\widetilde{\mathcal E}_{j'}\cup\{j'\})$, the corresponding certificates are
consistent, i.e., $V_{\iota_j(l)}=V_{\iota_{j'}(l)}$. Then,
$\tilde{\mathcal{I}}$ admits vector Lyapunov--Razumikhin certificates
$\{\widetilde{V}_j\}_{j\in\widetilde{\mathcal{N}}}$, where
$\widetilde{V}_j=V_{\iota_j(j)}$ for all
$j\in\widetilde{\mathcal{N}}$. Moreover, the corresponding coupling gains can
be chosen as $\widetilde{\gamma}_{jl}
=\gamma_{\iota_j(j)\iota_j(l)}$ for all
$j\in\widetilde{\mathcal N}$ and
$l\in\widetilde{\mathcal E}_j\cup\{j\}$.
\end{theorem}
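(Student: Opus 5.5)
The plan is to verify directly that the transplanted family $\{\widetilde V_j\}$ with gains $\widetilde\gamma_{jl}$ satisfies every hypothesis of Theorem~\ref{thm:lrf-siss} for $\tilde{\mathcal I}$, using the same constants $p,\epsilon,\psi,\alpha_1,\alpha_2$ as for $\mathcal I$. Conclusions about sISS of $\tilde{\mathcal I}$ then follow from Theorem~\ref{thm:lrf-siss}, with $\beta,\mu$ as in Remark~\ref{rmk:explicit_beta_gamma}; these depend only on the inherited constants and are therefore network-size independent. Throughout, for each $j\in\widetilde{\mathcal N}$ we write $i:=\iota_j(j)$ and use the local map $\iota_j$ to identify the neighborhood $\{j\}\cup\widetilde{\mathcal E}_j$ of $\tilde{\mathcal I}$ with $\{i\}\cup\mathcal E_i$ of $\mathcal I$.

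\emph{Step 1 (small gain and class-$\mathcal K_\infty$ bounds).} Condition~\eqref{eq:class_K_bounds} is immediate, since $\widetilde V_j=V_i$ and the bounds hold for every $V_i$ with common $\alpha_1,\alpha_2$. For~\eqref{eq:small gain}, I will use that $\iota_j$ maps $\widetilde{\mathcal E}_j$ onto $\mathcal E_i$ and $j\mapsto i$. Assuming $\iota_j$ is injective (as the text preceding the definition indicates), this gives
\begin{align*}
\sum_{l\in\widetilde{\mathcal E}_j\cup\{j\}}\widetilde\gamma_{jl}=\sum_{m\in\mathcal E_i\cup\{i\}}\gamma_{im}\le 1-\epsilon,
\end{align*}
and taking the maximum over $j$ yields the small-gain condition for $\tilde{\mathcal I}$. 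If injectivity is not explicit, I will argue it from the bijection between neighbor sets, or else add it as an implicit requirement of Definition~\ref{dfn:stru_eq_delay}.

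\emph{Step 2 (conditional decrement).} Fix $j$, a time $k$, and any trajectory of $\tilde{\mathcal I}$ whose local data $(\tilde x_{j,k},\{\tilde x_{l,k-\tilde s_{jl}}\},\tilde d_{j,k})$ together with the histories $\{\tilde x_{l,k-s}\}_{s\le\tau_{\max}}$ satisfy the Razumikhin condition~\eqref{eq:Razumikhin} for $\widetilde V$. I will define a local configuration for $\mathcal I$ by setting $x_{\iota_j(l),k-s}:=\tilde x_{l,k-s}$ for $l\in\widetilde{\mathcal E}_j\cup\{j\}$. By $\widetilde f_j=f_i$ and the delay identity $\tilde s_{jl}=s_{i\iota_j(l)}$, the one-step update gives $\tilde x_{j,k+1}=x_{i,k+1}$. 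The consistency hypothesis $V_{\iota_j(l)}=V_{\iota_{j'}(l)}$ ensures that $\widetilde V_l$, which is defined through $\iota_l(l)$, coincides with $V_{\iota_j(l)}$ when $l$ is viewed from agent $j$'s neighborhood. Consequently, both the Razumikhin inequality and the decrement inequality for $\tilde{\mathcal I}$ at $j$ are literally the corresponding inequalities for $\mathcal I$ at $i$ evaluated on this configuration. Since condition (ii) of Theorem~\ref{thm:lrf-siss} for $\mathcal I$ is a pointwise implication in the local variables of agent $i$, the decrement~\eqref{eq:decremental condition} transfers with $\widetilde\gamma_{jl}=\gamma_{i\iota_j(l)}$.

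\emph{Main obstacle.} The delicate point is Step 2's reliance on the certificate for $\mathcal I$ being a pointwise statement over all local configurations, including delay histories and disturbances. It must not be a statement only along actual trajectories of $\mathcal I$, because a configuration produced by $\tilde{\mathcal I}$ need not be realizable as part of a global trajectory of $\mathcal I$. I will make this explicit: condition (ii) of Theorem~\ref{thm:lrf-siss} is stated for all $x_{i,k}$, neighbor histories and $d_{i,k}\in\mathcal W_i$, so only the local data matter. I will also need $\mathcal W_j$ for $\tilde{\mathcal I}$ to be contained in $\mathcal W_i$, which I would read as part of $\widetilde f_j=f_i$. The second subtlety is the well-definedness of $\widetilde V_l$ across overlapping neighborhoods, which is exactly what the consistency assumption supplies.
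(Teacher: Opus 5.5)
Your proposal is correct and follows essentially the same route as the paper's proof: relabel the local delayed neighborhood via $\iota_j$, use $\widetilde f_j=f_{\iota_j(j)}$ and the delay identity to obtain $\tilde x_{j,k+1}=x_{\iota_j(j),k+1}$, and apply the consistency hypothesis with $j'=l$ to identify $\widetilde V_l$ with $V_{\iota_j(l)}$, so that the Razumikhin implication transfers verbatim; the gains are then copied so the row sums coincide with those of $\mathcal I$. The paper relies on injectivity of $\iota_j$ and on the pointwise (all local configurations) reading of condition (ii) just as implicitly as you flag them; note only that injectivity cannot be deduced from $\iota_j(\widetilde{\mathcal E}_j)=\mathcal E_{\iota_j(j)}$, which gives surjectivity alone, so your fallback of treating it as part of Definition~\ref{dfn:stru_eq_delay} is the right choice.
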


The proof of Theorem~\ref{thm:Certifiable_Equivalence_delay} is given in Appendix~\ref{app: proof_Certifiable_Equivalence_delay}.
Then, definition~\ref{dfn:eq_nodes_delay} extends this notion to the node level by formalizing structural equivalence between agents (see Appendix~\ref{app:Illustrative Examples} for an illustrative example). 
Theorem~\ref{thm:eq_nodes_delay} then shows that structurally equivalent nodes can share an identical certificate, which can substantially reduce verification complexity.

\begin{definition}[Node Structural Equivalence for Delayed Systems] \label{dfn:eq_nodes_delay}
Let $\mathcal{I}$ be an interconnected system with delays. A node $j \in \mathcal{N}$ is said to be structurally equivalent to node $\iota(j)$ under a permutation $\iota$ of $\mathcal{N}$ if for each $j \in \mathcal{N}$: (1) $f_j = f_{\iota(j)}$, (2) $\iota(\mathcal{E}_j) = \mathcal{E}_{\iota(j)}$, and (3) for each $l \in \mathcal{E}_j$, the delay $s_{jl}$ is identical to the corresponding delay $s_{\iota(j)\iota(l)}$.

\end{definition}

\begin{theorem}[Identical Certificates for Structurally Equivalent Nodes in Delayed Systems]\label{thm:eq_nodes_delay}
Suppose an interconnected system with delays $\mathcal{I}$ satisfies the sISS conditions. Then, there exists a sISS certificate $\{V_i\}_{i\in\mathcal{N}}$ such that $V_i = V_{j}$ for any structurally equivalent nodes $i,j \in \mathcal{N}$.
\end{theorem}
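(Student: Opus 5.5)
Our plan is to start from an arbitrary sISS certificate $\{V_i\}_{i\in\mathcal N}$ with gains $\Gamma$, constants $p,\epsilon,\psi,\alpha_1,\alpha_2$ as in Theorem~\ref{thm:lrf-siss}, and symmetrize it over the group generated by the structural-equivalence permutations. Let $H$ be the set of permutations $\iota$ of $\mathcal N$ satisfying Definition~\ref{dfn:eq_nodes_delay}. This set is closed under composition and inverses, so $H$ is a finite group. Call $i,j$ structurally equivalent if $j=\iota(i)$ for some $\iota\in H$; this is an equivalence relation (the orbits of $H$). We then define
\begin{align*}
\bar V_i:=\max_{\iota\in H} V_{\iota(i)},\qquad \bar\gamma_{i,j}:=\max_{\iota\in H}\gamma_{\iota(i),\iota(j)} .
\end{align*}
An averaged version (with $\frac1{|H|}\sum_{\iota\in H}$) is the natural alternative. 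Since $\iota\mapsto\iota\circ\sigma$ is a bijection of $H$, we get $\bar V_{\sigma(i)}=\bar V_i$ for every $\sigma\in H$. Hence $\bar V_i=\bar V_j$ whenever $i$ and $j$ are structurally equivalent, and the same invariance holds for $\bar\gamma$.

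We then check the conditions of Theorem~\ref{thm:lrf-siss} in order.

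\textbf{Class-$\mathcal K_\infty$ bounds.} These are immediate, because every $V_{\iota(i)}$ lies between $\alpha_1$ and $\alpha_2$, and so does any max or average of them. Here we use that equivalent nodes have the same state dimension, since $f_i=f_{\iota(i)}$.

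\textbf{Sparsity.} Because $\iota(\mathcal E_i)=\mathcal E_{\iota(i)}$, the gains $\bar\gamma_{i,j}$ are supported on $\mathcal E_i\cup\{i\}$.

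\textbf{Decrement.} The key observation is a transport property. Since $f_i=f_{\iota(i)}$, the neighbor sets correspond, and the delays match, any admissible local delayed history of agent $i$, relabelled through $\iota$, is an admissible local history of agent $\iota(i)$ with the same successor state. We may therefore apply condition (ii) for $V_{\iota(i)}$ to the history of $i$ evaluated with the functions $V_{\iota(j)}$. For the averaged version, summing these inequalities over $\iota$ and dividing by $|H|$ gives
\begin{align*}
\bar V_i(x_{i,k+1})\le \sum_j \frac1{|H|}\sum_\iota \gamma_{\iota(i)\iota(j)}V_{\iota(j)}(x_{j,k})+\psi\|d_{i,k}\|_2 .
\end{align*}
The right-hand side must then be bounded by $\sum_j\bar\gamma_{i,j}\bar V_j$. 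With the max definitions this bound is direct. The small-gain row sums are preserved under averaging; under max they need a separate check.

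The main obstacle is the Razumikhin premise. The premise for $\bar V_i$, namely $\max_{j,s}\bar V_j(x_{j,k-s})\le p\,\bar V_i(x_{i,k})$, does not imply the premise $\max_{j,s} V_{\iota(j)}\le p\,V_{\iota(i)}$ for each individual $\iota$. This mismatch is also what breaks small-gain under the max definition.

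To handle it, we first try to avoid mixing altogether. We exploit that the hypothesis is existential: we seek one certificate on a single orbit representative and propagate it. We would argue, via the transport property, that the pulled-back family $\{V_{\iota(j)}\}_j$ is itself a valid certificate for every $\iota\in H$, since relabelling by an automorphism maps certificates to certificates. Then we choose $\bar V_i:=V_{r(i)}$, where $r(i)$ is a fixed representative of the orbit of $i$.

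Consistency of this choice across overlapping neighborhoods is exactly the issue handled by the consistency hypothesis in Theorem~\ref{thm:Certifiable_Equivalence_delay}. We would therefore invoke Theorem~\ref{thm:Certifiable_Equivalence_delay} with $\tilde{\mathcal I}=\mathcal I$ and the maps $\iota_j$ given by elements of $H$ sending $r(j)$ to $j$.

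If consistency fails, we fall back to strengthening the construction. Weakening $p$ to $p'=p\cdot\max_\iota\alpha_2/\alpha_1$-type ratios is not available for general $\mathcal K_\infty$ functions. The fallback therefore assumes the symmetric orbit structure lets a single representative's certificate be reused, and we would state that assumption explicitly.
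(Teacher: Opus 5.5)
Your proposal does not reach the statement. It isolates the right obstruction, but your proposed resolution is circular, and the fallback simply adds a hypothesis.

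\textbf{The obstruction.} For $\bar V_i=\max_{\iota\in H}V_{\iota(i)}$ (or the average), the premise $\max_{j,s}\bar V_j(x_{j,k-s})\le p\,\bar V_i(x_{i,k})$ guarantees the Razumikhin premise of the pulled-back family $\{V_{\iota(j)}\}_j$ only for those $\iota$ attaining $\bar V_i(x_{i,k})$. Bounding $\bar V_i(x_{i,k+1})$, however, needs the decrement for the $\iota^\ast$ attaining the maximum at $x_{i,k+1}$. Nothing links the two.

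\textbf{Why the ``avoid mixing'' route fails.} With $\bar V_j:=V_{r(j)}$, the condition at node $j$ is transported along a map with $\iota_j(j)=r(j)$. This is the direction Theorem~\ref{thm:Certifiable_Equivalence_delay} needs, since $\widetilde V_j=V_{\iota_j(j)}$; a map sending $r(j)$ to $j$ is the wrong way round. Different nodes then use different pullbacks, so the construction still mixes families. The consistency hypothesis of that theorem, taken with $j'=l$, requires $V_{\iota_j(l)}=V_{r(l)}$ for every $l\in\mathcal E_j$. In general $\iota_j(l)$ and $r(l)$ are distinct points of the same orbit. So you need the given certificate to be invariant along orbits, which the hypotheses do not supply and which is of the same kind as the conclusion. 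Your fallback assumes something of this sort outright. What you would prove is therefore a weaker statement with an extra hypothesis.

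\textbf{Comparison with the paper.} The paper's proof is your first construction, restricted to the cyclic group generated by one permutation $\iota$ of order $M$. It sets $\widetilde V_i=\max_{m<M}V_{\iota^m(i)}$, picks $m^\ast$ maximizing $V_{\iota^m(i)}(x_{i,k+1})$, and invokes the decrement of the family $\{V_{\iota^{m^\ast}(j)}\}_j$. Its identity $x_{\iota^m(i),k+1}=x_{i,k+1}$ is, read correctly, your relabelling of local histories. The paper never derives that family's Razumikhin premise from the premise for $\widetilde V$. Its gains $\gamma_{\iota^{m^\ast}(i)\iota^{m^\ast}(j)}$ also depend on $i$ and on the state through $m^\ast$, so only the row-sum bound survives, not a fixed $\Gamma$.

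So the paper passes over exactly the step you flag. Your diagnosis is correct, but neither argument supplies the missing ingredient. A complete proof needs either an extra hypothesis (e.g.\ that the given certificate is already constant on orbits) or a new argument showing that the aggregated premise forces the premise of the maximizing family.
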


Theorem~\ref{thm:eq_nodes_delay} implies that $\mathcal N$ can be partitioned into structural equivalence classes as in Definition~\ref{dfn:eq_nodes_delay}, with all nodes in the same class sharing an identical certificate. The proof is given in Appendix~\ref{app: proof_eq_nodes_delay}.
Consequently, verification can be simplified by checking only a single representative node from each class.

\section{Numerical Simulation}
In this section, we conduct numerical simulations to evaluate the proposed framework. Section~\ref{subsec: Verification Performance} evaluates the verification efficiency of the proposed method. Section~\ref{subsec: Control Performance} compares the control performance with benchmark methods. We evaluate our proposed method in three application environments, including vehicle platoon~\citep{11277300}, drone formation control~\citep{ghamry2015formation}, and microgrid~\citep{silva2021string}. Section~\ref{subsec: Simulation Demonstration} presents the simulation visualizations. Detailed experiment settings are given in Appendix~\ref{app: Experiment Settings}.

\subsection{Verification Performance}
\label{subsec: Verification Performance}

\begin{table*}[h]
\centering
\caption{Verification runtime comparison under different verification designs.}
\begin{tabular}{llcccccccccccc}
\hline
Scenario & $N$
& \multicolumn{2}{c}{Ours}
& \multicolumn{2}{c}{Reachability only}
& \multicolumn{2}{c}{Scalability only}
& \multicolumn{2}{c}{Direct} \\
\cline{3-4}\cline{5-6}\cline{7-8}\cline{9-10}
 & 
& $\tau_{\max}{=}1$  & $\tau_{\max}{=}5$
& $\tau_{\max}{=}1$  & $\tau_{\max}{=}5$
& $\tau_{\max}{=}1$  & $\tau_{\max}{=}5$
& $\tau_{\max}{=}1$  & $\tau_{\max}{=}5$ \\
\hline
Platoon   & 10 & 385$\pm$2.9 & 372$\pm$7.4 & 4753$\pm$ 22.6 & 4635$\pm$12.3 & TO & TO & TO & TO      \\
Platoon   & 50 & 377$\pm$1.3 & 380$\pm$2.4 & TO & TO & TO & TO & TO & TO    \\
Drone     & 10 & 1114$\pm$33.2 & 1243$\pm$6.3 & 11295$\pm$7.5 & 10868$\pm$6.2 & TO & TO  & TO & TO    \\
Drone     & 50 & 1044$\pm$20.0 & 1106$\pm$16.4 & TO & TO & TO & TO & TO & TO     \\
Microgrid & 6 & 1998$\pm$13.1 & 1970$\pm$10.1 & 8931$\pm$18.6 & 8853$\pm$9.5 & TO & TO & TO & TO    \\
Microgrid & 50 & 1999$\pm$4.6 & 1961$\pm$11.2 & TO & TO & TO & TO & TO & TO    \\
\hline
\end{tabular}
\label{tab:runtime_wide_methods}
\end{table*}

Table~\ref{tab:runtime_wide_methods} compares four verification designs under two delay horizons $\tau_{\max}\in\{1,5\}$:
(i) \emph{Ours}, which combines \emph{reachability-constrained delay domains} as in Theorem~\ref{thm:sampled_delayed_siss} with a \emph{scalable sISS verification structure} as in Theorems~\ref{thm:Certifiable_Equivalence_delay} and~\ref{thm:eq_nodes_delay};
(ii) \emph{Reachability only}, which only uses the reachability-constrained domains verification design;
(iii) \emph{Scalability only}, which keeps the scalable structure but verifies on the original delay domains as in Eq.~\eqref{eq:original_verification};
and (iv) \emph{Direct}, which performs verification directly on the original domains without either design. 
`TO' indicates timeout when the verification runtime exceeds 5 hours.

The results show that combining reachability reduction and structural reuse is crucial for practical runtimes in large-scale delayed systems.
Ours is the only design that consistently finishes across all scenarios and scales, while its runtime remains nearly invariant as $N$ increases.
This indicates that the verification cost is dominated by local subproblems rather than the global system dimension.
In contrast, Reachability only can complete for small-scale instances ($N{=}10$) but is about one order of magnitude slower than Ours and times out for $N{=}50$, showing that shrinking the domain alone cannot prevent the dimension blow-up.
Moreover, Scalability only and Direct time out in all cases, implying that without reachability-constrained domains the verification remains overly conservative and computationally prohibitive even with a scalable structure.

\subsection{Control Performance}
\label{subsec: Control Performance}
To evaluate stability in the proposed delay-aware framework, we focus on the boundedness of the trajectory tracking error under delayed observations.
Define the tracking error
$e_{i,k}=x_{i,k}-x_{i}^{\star}$ with respect to the equilibrium $x_{i}^{\star}$.
We measure the root-mean-square tracking error (RMSE) averaged over the horizon and all agents:
\begin{align}
\mathrm{RMSE}
= \sqrt{\frac{1}{NT}\sum_{k=1}^{T}\sum_{i=1}^{N}\|e_{i,k}\|_2^2},
\label{eq:rmse}
\end{align}
where $N$ is the number of agents in the network and $T$ is the total number of discrete time steps in the
evaluation horizon. This RMSE serves as an empirical evidence for stability in the delay environment. We test varying disturbances into the system and report the
resulting $\mathrm{RMSE}$ across the following benchmark methods:

\noindent \textbf{1.Nominal controller}: The original controller without stability certification. Reinforcement learning for platoon control~\citep{11277300}, LQR for UAV control~\citep{ghamry2015formation}, and linear feedback control for microgrids~\citep{silva2021string}.

\noindent \textbf{2.Predictor feedback}~\citep{xu2022stability}: A model-based delay-compensation baseline that predicts the current state and applies to the nominal controller.

\noindent \textbf{3.Compositional ISS}~\citep{zhang2023compositional}: A certificate baseline enforcing an ISS small-gain condition.

\noindent \textbf{4.sISS}~\citep{qiu2024scalable,zhou2025synthesis}: A certificate baseline targeting scalable ISS.

\noindent \textbf{5.Proposed method}: The delay-aware neural Lyapunov-Razumikhin certification framework.

\begin{table}[h]
\centering
\caption{RMSE under sinusoidal disturbances for platoon environment.}
\label{tab:rmse_sine_platoon}
\begin{tabular}{lcc}
\toprule
 & $1/15$ Hz, $4$ m/s & $1/15$ Hz, $7$ m/s \\
\midrule
Predictor Feedback & 1.04 &2.12 \\
\hdashline
Nominal Controller   & 1.05 & 2.23 \\
Compositional ISS & 0.86&  1.81\\
sISS           & 0.95&  1.97\\
Proposed Method    & \textbf{0.81} & \textbf{1.78} \\
\bottomrule
\end{tabular}
\end{table}

\begin{table}[h]
\centering
\caption{RMSE under sinusoidal disturbances for drone formation environment.}
\label{tab:rmse_sine_drone}
\begin{tabular}{lcc}
\toprule
 & $1/15$ Hz, $0.5$ m/s & $1/15$ Hz, $3$ m/s \\
\midrule
Predictor Feedback &34.211 & 34.317 \\
\hdashline
Nominal Controller   & 34.584 & 34.712 \\
Compositional ISS & 34.574 & 34.704 \\
sISS           & 34.583& 34.710 \\
Proposed Method    & \textbf{34.568} & \textbf{34.698} \\
\bottomrule
\end{tabular}
\end{table}

\begin{table}[h]
\centering
\caption{RMSE under initial disturbances for microgrid environment.}
\label{tab:rmse_initial_microgrid}
\begin{tabular}{lcc}
\toprule
 & 0.5 rad/s& 1 rad/s  \\
\midrule
Predictor Feedback & 0.082 & 0.138\\
\hdashline
Nominal Controller & 0.082   & 0.137 \\
Compositional ISS &  0.079& 0.142 \\
sISS          & 0.119 & 0.171  \\
Proposed Method   & \textbf{0.072} & \textbf{0.133}  \\
\bottomrule
\end{tabular}
\end{table}

\begin{figure*}[h]
    \centering
    \begin{subfigure}[t]{0.33\linewidth}
        \centering
        \includegraphics[width=\linewidth]{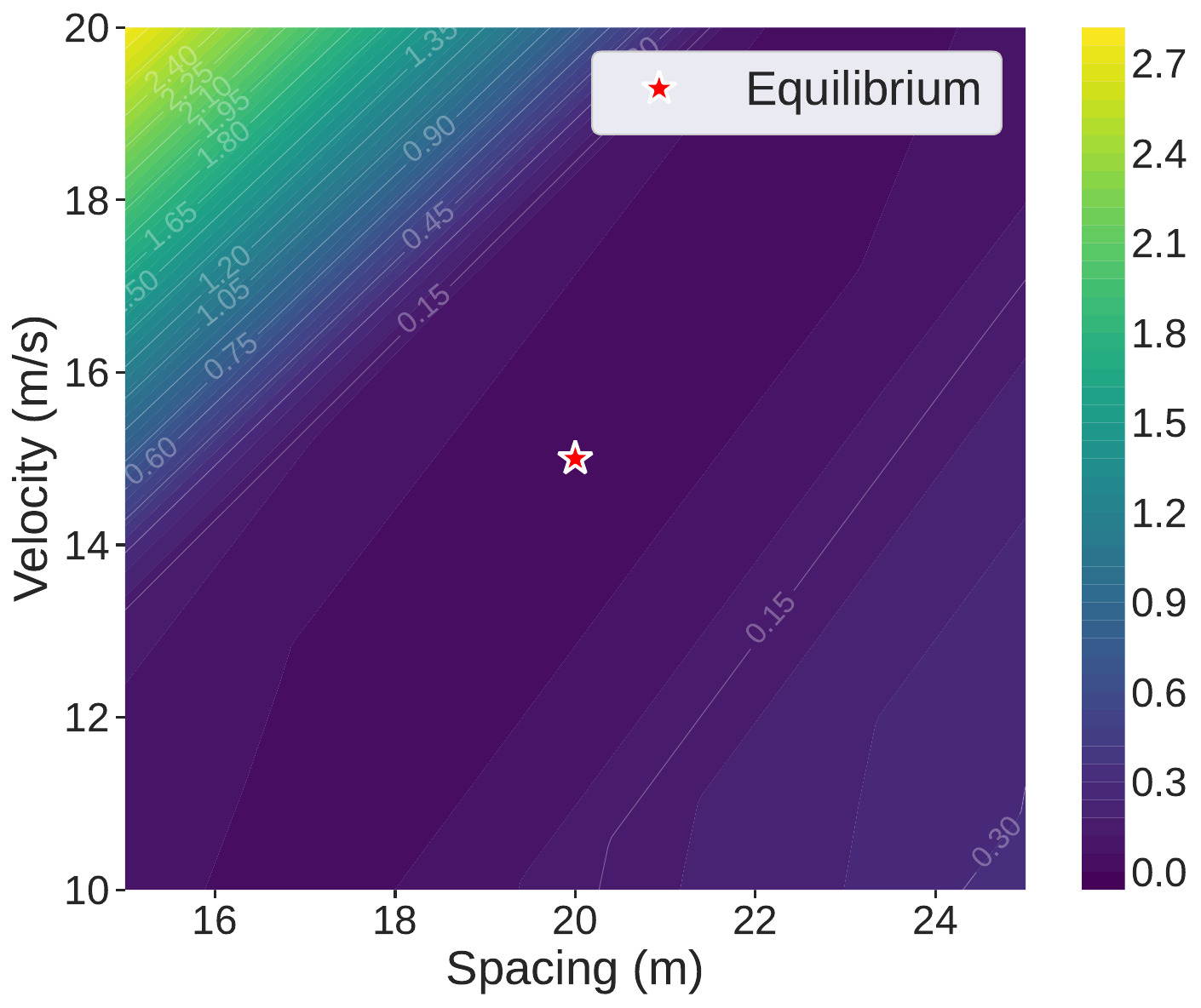}
        \caption{Mixed-autonomy platoon.}
        \label{fig:lyap_platoon}
    \end{subfigure}\hfill
    \begin{subfigure}[t]{0.33\linewidth}
        \centering
        \includegraphics[width=\linewidth]{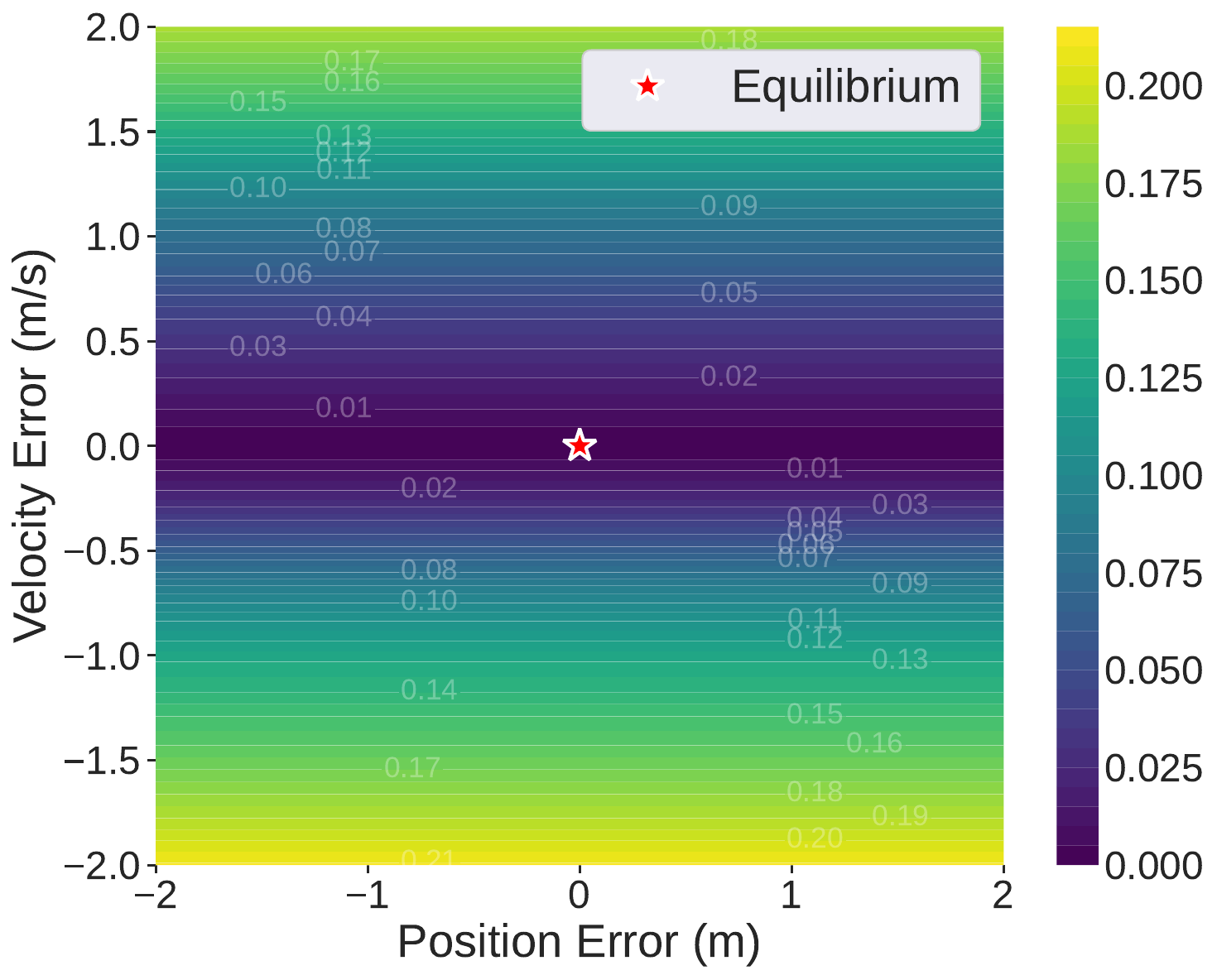}
        \caption{UAV formation.}
        \label{fig:lyap_uav}
    \end{subfigure}\hfill
    \begin{subfigure}[t]{0.33\linewidth}
        \centering
        \includegraphics[width=\linewidth]{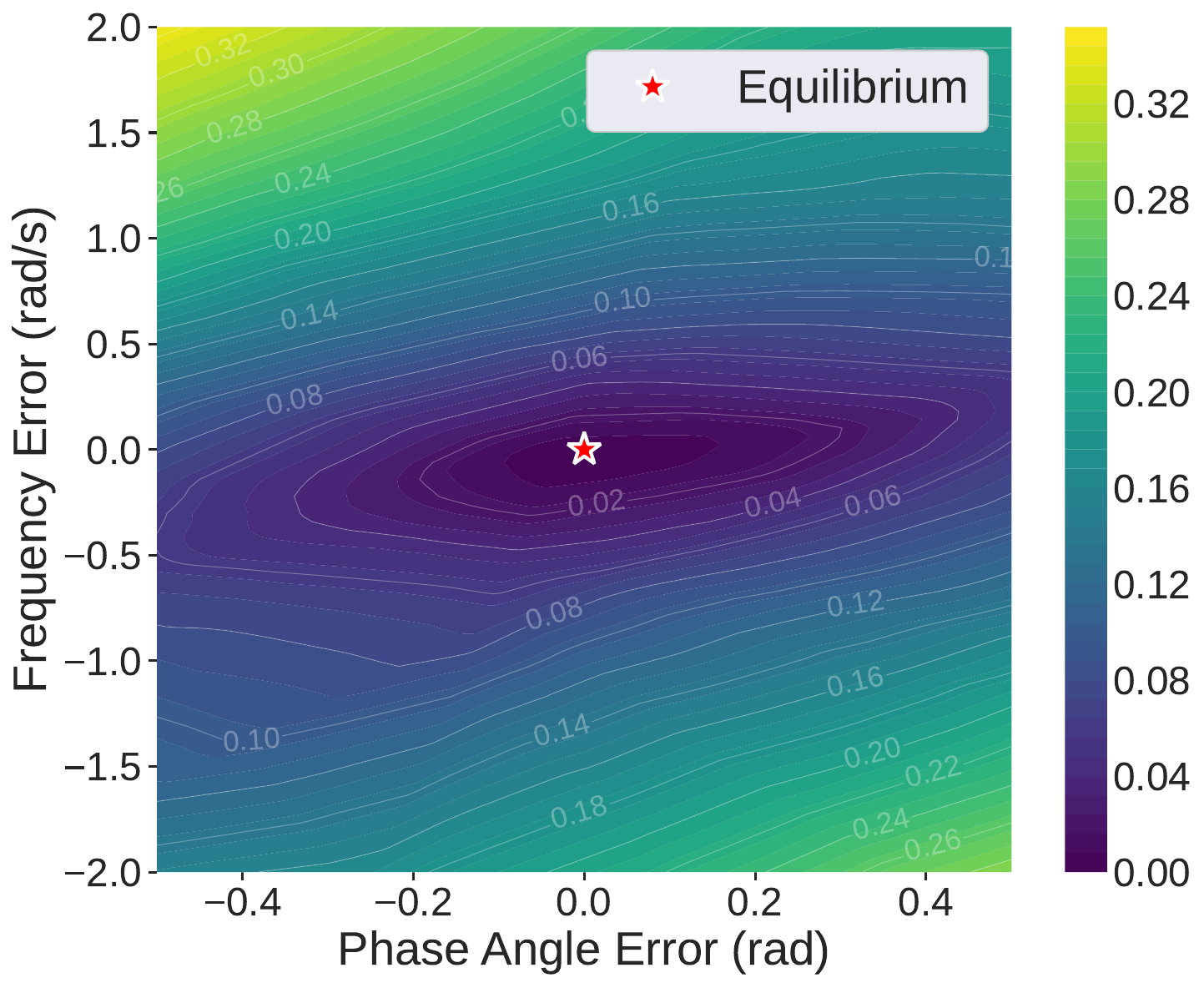}
        \caption{Microgrid inverter.}
        \label{fig:lyap_microgrid}
    \end{subfigure}
    \caption{Lyapunov function evolution on three benchmarks under the proposed delayed-sISS framework.}
    \label{fig:lyapunov_all}
\end{figure*}

\begin{figure*}[h]
    \centering
    \begin{subfigure}[t]{0.33\linewidth}
        \centering
        \includegraphics[width=\linewidth]{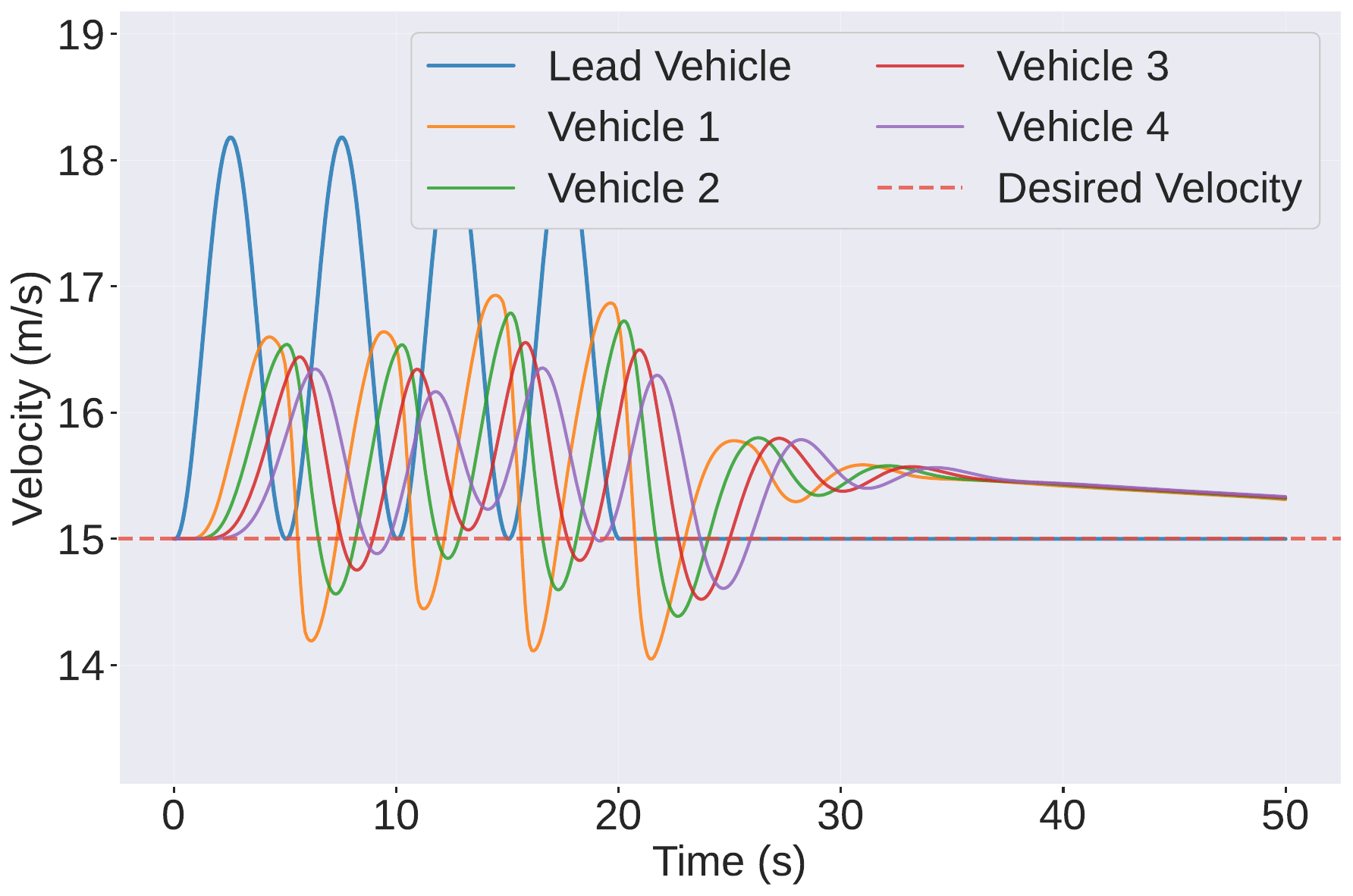}
        \caption{Mixed-autonomy platoon.}
        \label{fig:traj_platoon}
    \end{subfigure}\hfill
    \begin{subfigure}[t]{0.33\linewidth}
        \centering
        \includegraphics[width=\linewidth]{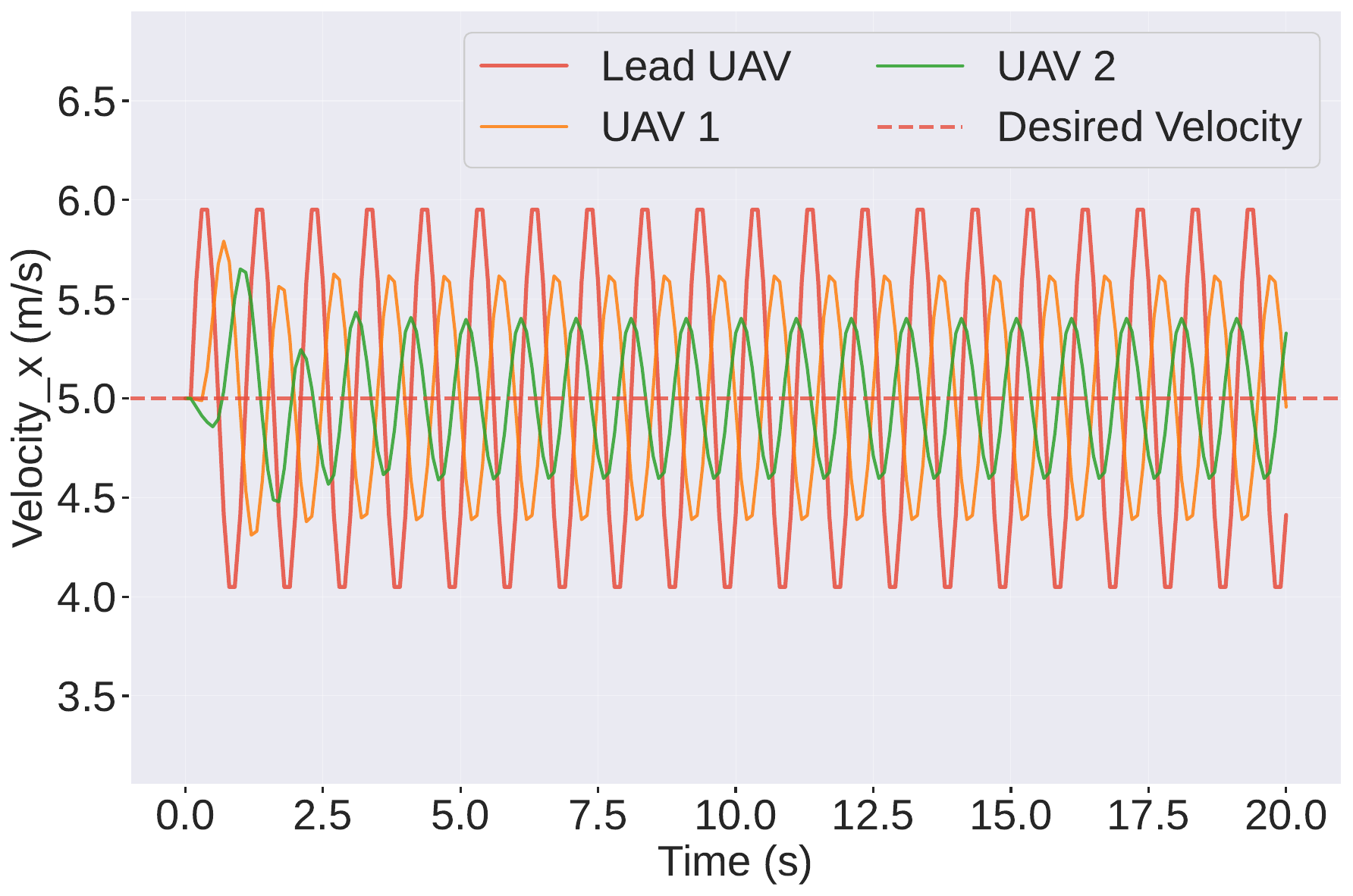}
        \caption{UAV formation.}
        \label{fig:traj_uav}
    \end{subfigure}\hfill
    \begin{subfigure}[t]{0.33\linewidth}
        \centering
        \includegraphics[width=\linewidth]{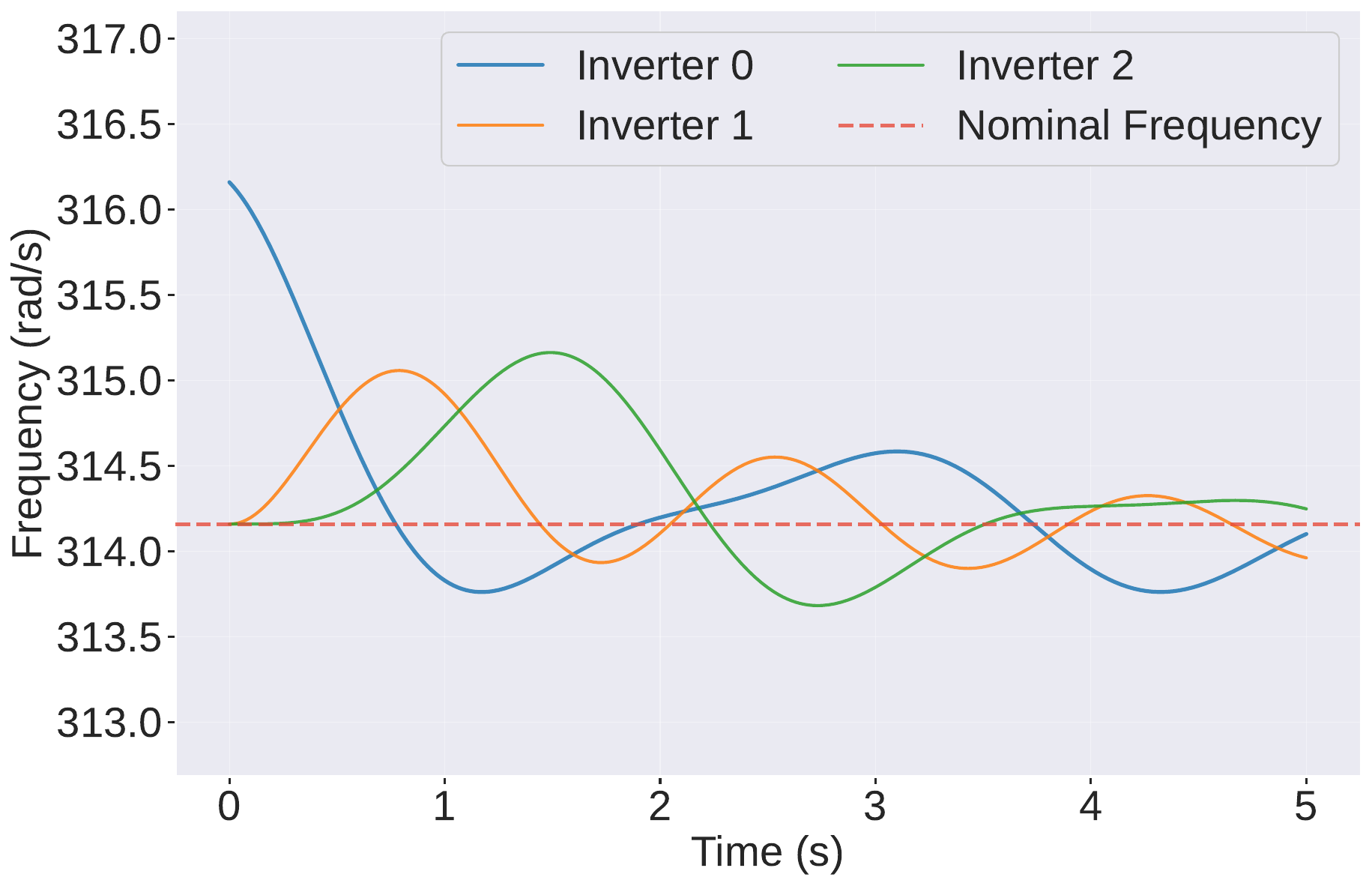}
        \caption{Microgrid inverter.}
        \label{fig:traj_microgrid}
    \end{subfigure}
    \caption{System trajectories on three benchmarks under the proposed delayed-sISS framework.}
    \label{fig:trajectory_all}
\end{figure*}

Tables~\ref{tab:rmse_sine_platoon}-\ref{tab:rmse_initial_microgrid} summarize the RMSE under different disturbance settings in three environments.
Across all three tasks, the proposed delay-aware method consistently achieves comparable results, indicating stronger disturbance attenuation and improved tracking performance.

In the platoon environment (Table~\ref{tab:rmse_sine_platoon}), increasing the sinusoidal disturbance amplitude
from $4$~m/s to $7$~m/s substantially increases RMSE for all methods. The proposed method achieves the lowest RMSE in both cases, reducing RMSE by $22.86\%$ at
$4$~m/s and by $20.18\%$ at $7$~m/s, and further improving upon
Compositional ISS by $5.81\%$ and $1.66\%$, respectively. Notably, it
also outperforms the predictor-feedback baseline, demonstrating comparable performance without an explicit vehicle model.

For drone formation (Table~\ref{tab:rmse_sine_drone}), RMSE values are numerically very close
across controllers and disturbance levels. Relative to the nominal controller, the proposed method improves RMSE
by only $0.046\%$ and $0.040\%$, and by about $0.017\%$
relative to Compositional ISS in both cases. It also matches the model-based predictor-feedback baseline,
suggesting robust behavior under the tested settings.

In the microgrid environment (Table~\ref{tab:rmse_initial_microgrid}), RMSE increases with the initial
disturbance magnitude as expected. The proposed method yields the lowest RMSE, reducing RMSE by $12.20\%$
at $0.5$~rad/s and by $2.92\%$ at $1$~rad/s, while also improving
upon Compositional ISS by $8.86\%$ and $6.34\%$, respectively. It
also outperforms the model-based predictor-feedback baseline in both cases, achieving comparable performance without an explicit microgrid model.

\subsection{Simulation Demonstration}
\label{subsec: Simulation Demonstration}
\noindent \textbf{Lyapunov Visualization.} Figure~\ref{fig:lyapunov_all} plots the neural Lyapunov-Razumikhin functions on three benchmarks.
In all cases, the Lyapunov value is small near the equilibrium and increases as the state deviates, indicating a consistent energy-like measure.

\noindent \textbf{Trajectory Visualization.} Figure~\ref{fig:trajectory_all} shows the corresponding closed-loop trajectories.
The platoon velocities converge to the desired value, the UAV velocities remain bounded around the desired velocity, and the inverter frequencies stay bounded and approach the nominal frequency.

\section{Conclusion}
In this paper, we developed a scalable delay-aware certification framework for interconnected systems.
We established a sufficient condition for discrete-time sISS using vector Lyapunov-Razumikhin functions, and proposed a scalable synthesis-and-verification pipeline that learns neural certificates and verifies them over reachability-constrained delay domains.
Across mixed-autonomy platoons, drone formations, and microgrids, the proposed approach improves verification efficiency while achieving competitive tracking performance in the presence of communication delays.

\section*{Impact Statement}
Our work advances machine learning methods for synthesizing and formally verifying stability certificates, and it can be broadly useful across cyber-physical systems where delays are unavoidable, including transportation, robotics, and power systems. Beyond these examples, the same framework may apply to other large-scale networked control settings. Overall, we do not anticipate impacts that require special discussion beyond this general applicability.

\section*{Acknowledgement}
This research was supported by the Singapore Ministry of Education (MOE) under its Academic Research Fund Tier 1 (A-8003262-00-00).

\bibliography{ref}
\bibliographystyle{icml2026}

\newpage
\appendix
\onecolumn
\section{Theoretical Results}
\subsection{Proof of Theorem~\ref{thm:lrf-siss}}
\label{app: proof_1}
\begin{proof}
Define the composite Lyapunov function
\begin{align}
    V_{\max}(k) := \max_{i \in \mathcal N} V_i(x_{i,k}).
\end{align}
By the class-$\mathcal K_\infty$ bounds in~\eqref{eq:class_K_bounds}, there exist $\alpha_1,\alpha_2 \in \mathcal K_\infty$ such that
\begin{align}
    \alpha_1\!\left(\max_{i \in \mathcal N} |x_{i,k}|_2\right) \;\le\; V_{\max}(k) \;\le\; \alpha_2\!\left(\max_{i \in \mathcal N} |x_{i,k}|_2\right).
\end{align}
Thus, it suffices to establish an sISS bound for $V_{\max}(k)$.

\smallskip
\noindent\textbf{Case 1: Razumikhin condition not satisfied.}  
Suppose that at time $k$, for some agent $i$,
\begin{align}
    \max_{j \in \mathcal E_i \cup \{i\}} \;\max_{s \in \{1,\dots,\tau_{\max}\}} V_j(x_{j,k-s}) \;>\; p V_i(x_{i,k}), \qquad p>1.
\end{align}
Then we have
\begin{align}
    V_i(x_{i,k}) \;\le\; \tfrac{1}{p} \max_{j \in \mathcal E_i \cup \{i\}} \;\max_{s \in \{1,\dots,\tau_{\max}\}} V_j(x_{j,k-s}).
\end{align}
Taking the maximum over all $i$, which implies
\begin{align}
    V_{\max}(k) \;\le\; \tfrac{1}{p} \max_{s \in \{1,\dots,\tau_{\max}\}} V_{\max}(k-s).
\end{align}
Applying this argument recursively, every $(\tau_{\max} + 1)$ steps $V_{\max}$ decreases by at least a factor $1/p$, which implies
\begin{align}\label{eq:case1}
V_{\max}(k) \le p^{-\left\lfloor \frac{k}{\tau_{\max}+1} \right\rfloor} V_{\max}(0)
\le p*p^{-\frac{k}{\tau_{\max}+1}} V_{\max}(0)= p\exp\!\left(-\frac{\ln p}{\tau_{\max}+1}\,k\right)V_{\max}(0).
\end{align}

\smallskip
\noindent\textbf{Case 2: Razumikhin condition satisfied.}  
If condition Eq.~\eqref{eq:Razumikhin} holds for agent $i$ at time $k$, then by Eq.~\eqref{eq:decremental condition},
\begin{align}
    V_i(x_{i,k+1}) \;\le\; \sum_{j \in \mathcal E_i \cup \{i\}} \gamma_{i,j} V_j(x_{j,k}) \;+\; \psi \|d_{i,k}\|_2.
\end{align}
Taking the maximum over $i \in \mathcal N$ and using the small-gain condition~\eqref{eq:small gain},
\begin{align}
    V_{\max}(k+1) \;\le\; (1-\epsilon) V_{\max}(k) \;+\; \psi \max_{i \in \mathcal N} \|d_{i,k}\|_2.
\end{align}
We expand this recursion step by step:
\begin{align}
    V_{\max}(1) &\le (1-\epsilon) V_{\max}(0) + \psi \max_{i} \|d_{i,0}\|_2, \\
    V_{\max}(2) &\le (1-\epsilon) V_{\max}(1) + \psi \max_{i} \|d_{i,1}\|_2 \notag\\
                &\le (1-\epsilon)^2 V_{\max}(0) + (1-\epsilon)\psi \max_{i} \|d_{i,0}\|_2 + \psi \max_{i} \|d_{i,1}\|_2, \\
    V_{\max}(3) &\le (1-\epsilon)^3 V_{\max}(0) + (1-\epsilon)^2 \psi \max_{i} \|d_{i,0}\|_2 + (1-\epsilon)\psi \max_{i} \|d_{i,1}\|_2 + \psi \max_{i} \|d_{i,2}\|_2.
\end{align}
Continuing this expansion, we obtain the general bound
\begin{align}
    V_{\max}(k) \;\le\; (1-\epsilon)^k V_{\max}(0) + \sum_{t=0}^{k-1} (1-\epsilon)^{k-1-t}\, \psi \max_{i \in \mathcal N} \|d_{i,t}\|_2.
\end{align}
The summation can be upper-bounded using the supremum of the disturbances and the finite geometric series:
\begin{align}
    \sum_{t=0}^{k-1} (1-\epsilon)^{k-1-t}\, \psi \max_{i} \|d_{i,t}\|_2
    &\le \psi \left(\sup_{t \le k-1} \max_{i} \|d_{i,t}\|_2\right) \sum_{t=0}^{k-1} (1-\epsilon)^t \\
    &\le \frac{\psi}{\epsilon} \sup_{t \le k} \max_{i} \|d_{i,t}\|_2.
\end{align}
Therefore, for all $k \ge 0$,
\begin{align}\label{eq:case2}
    V_{\max}(k) \;\le\; (1-\epsilon)^k V_{\max}(0) + \frac{\psi}{\epsilon} \sup_{t \le k} \max_{i \in \mathcal N} \|d_{i,t}\|_2.
\end{align}

\smallskip
\noindent\textbf{Case 3: Switching between conditions.}  
At each time instant, either Case~1 (Razumikhin condition not satisfied) or Case~2 (Razumikhin condition satisfied) must hold. 
Thus, the evolution of $V_{\max}(k)$ is bounded by either the exponential decay in \eqref{eq:case1} or the geometric decay with disturbance input in \eqref{eq:case2}. 
By the discrete-time comparison principle, if the system trajectory is governed at each step by one of two inequalities, then the overall trajectory can be bounded by the worst-case decay rate of the two. 
Hence, combining \eqref{eq:case1} and \eqref{eq:case2}, we conclude that for all $k \ge 0$,
\begin{align}\label{eq:case3}
    V_{\max}(k) \;\le\; c\, \rho^k V_{\max}(0) \;+\; \frac{\psi}{\epsilon} \sup_{t \le k} \max_{i \in \mathcal N} \|d_{i,t}\|_2,
\end{align}
where
\begin{align}
    \rho := \max\!\left\{ \exp^{-\tfrac{\ln p}{\tau_{\max}+1}}, \; 1-\epsilon \right\} \in (0,1), \qquad c \ge p
\end{align}

\smallskip
\noindent\textbf{Conclusion.}  
From the class-$\mathcal K_\infty$ bounds in~\eqref{eq:class_K_bounds}, there exist functions $\beta \in \mathcal{KL}$ and $\mu \in \mathcal K$ defined by
\begin{align}
    \beta(r,k) := \alpha_1^{-1}\!\left(c \rho^k \alpha_2(r)\right), 
    \qquad 
    \mu(r) := \alpha_1^{-1}\!\!\left(\tfrac{\psi}{\epsilon} r\right),
\end{align}
such that the following bound holds for all $k \ge 0$:
\begin{align}
    \max_{i \in \mathcal N} |x_{i,k}|_2 
    &\;\le\; \beta\!\Big(\max_{i \in \mathcal N} |x_{i,0}|_2,\, k\Big) 
    \;+\; \mu\!\Big(\sup_{t \le k} \max_{i \in \mathcal N} \|d_{i,t}\|_2\Big)\notag\\
    &\;\le\; \beta\!\Big(\max_{i \in \mathcal N} |x_{i,0}|_2,\, k\Big) 
    \;+\; \mu\!\Big(\max_{i \in \mathcal N} \|d_i\|_{\mathcal{L}_{\infty}}\Big).
\end{align}
Moreover, since the initial history satisfies
\begin{align}
    \max_{i \in \mathcal N}\max_{s\in\{0,\dots,\tau_{\max}\}} |x_{i,-s}|_2 
    \;\ge\; \max_{i \in \mathcal N} |x_{i,0}|_2,
\end{align}
we have:
\begin{align}
    \max_{i \in \mathcal N} |x_{i,k}|_2 
    \;\le\; \beta\!\Big(\max_{i \in \mathcal N}\max_{s\in\{0,\dots,\tau_{\max}\}} |x_{i,-s}|_2,\, k\Big) 
    \;+\; \mu\!\Big(\max_{i \in \mathcal N} \|d_i\|_{\mathcal{L}_{\infty}}\Big).
\end{align}
then the above inequality implies the definition of sISS in Eq.~\eqref{eq: sISS_def}. This completes the proof.
\end{proof}

\subsection{Proof of Theorem~\ref{thm:reach_constrained_siss}}
\label{app: proof_2}
\begin{proof}
Consider the composite Lyapunov level $V_{\max}(k):=\max_{i\in\mathcal N}V_i(x_{i,k})$.
We analyze two regions separated by the threshold $R$.

\smallskip
\noindent\textbf{Outside region ($V_{\max}(k)>R$)}
By the definition of the reachable sets and the
domain $\mathcal Z^*_{i,k}$, the true trajectory satisfies
$(x_{i,k},\dots,x_{i,k-\tau_{\max}})\in\mathcal Z^*_{i,k}$ for all relevant $k$ prior to
leaving the reachability envelope. Hence, for every time instant such that
$V_{\max}(k)>R$, hypothesis (i) guarantees that the original Razumikhin-based
conditional logic holds along the true trajectory at time $k$.
Therefore, we invoke the conclusion of Theorem~\ref{thm:lrf-siss} on those time instants,
i.e., there exist constants $c\ge p$ and $\rho\in(0,1)$ such that the associated
comparison bound holds:
\begin{align}
    V_{\max}(k) \le c\,\rho^k V_{\max}(0).
\end{align}
Consequently, the time when the trajectory must have entered the level set
$\{V_{\max}\le R\}$ can be upper-bounded by solving
\begin{align}
c\,\rho^T V_{\max}(0) \le R
&\iff \rho^T \le \frac{R}{cV_{\max}(0)} \label{eq:step1}\\
&\iff \ln(\rho^T) \le \ln\!\Big(\frac{R}{cV_{\max}(0)}\Big) \label{eq:step2}\\
&\iff T\ln\rho \le \ln\!\Big(\frac{R}{cV_{\max}(0)}\Big) \label{eq:step3}
\end{align}

Since $\ln\rho<0$, this is equivalent to
\begin{align}
    T \ge \frac{\ln\!\big(R/(cV_{\max}(0))\big)}{\ln\rho}.
\end{align}
Thus, choosing
\begin{align}
    T_R:=\left\lceil \frac{\ln\!\big(R/(cV_{\max}(0))\big)}{\ln\rho}\right\rceil
\end{align}
guarantees $V_{\max}(T_R)\le R$, i.e., the trajectory enters $\{V_{\max}\le R\}$
no later than time $T_R$.

\smallskip
\noindent\textbf{Inside region ($V_{\max}(k)\le R$).}
By hypothesis (ii), on the entire set $\{V_{\max}\le R\}$, the sISS condition as in Eq.~\eqref{eq:case2} holds:
\begin{align}
    V_{\max}(k+1) \le (1-\epsilon)V_{\max}(k)
    + \psi\max_{i\in\mathcal{N}}\|d_{i,k}\|_2 .
\end{align}
Since the local radius is chosen such that
\begin{align}
    R \ge \frac{\psi}{\epsilon}\,\bar d,
    \qquad \bar d:=\sup_{k\ge 0}\max_{i\in\mathcal{N}}\|d_{i,k}\|_2,
\end{align}
for any $k$ with $V_{\max}(k)\le R$, we have 
\begin{align}
    V_{\max}(k+1)
    &\le (1-\epsilon)R+\psi\bar d \\
    &\le (1-\epsilon)R+\epsilon R \\
    &\le R,
\end{align}
which shows that the set $\{V_{\max}\le R\}$ is forward invariant.

Combining the two parts, the trajectory enters $\{V_{\max}\le R\}$ by time $T_R$ and,
by forward invariance, satisfies $V_{\max}(k)\le R$ for all $k\ge T_R$.
\end{proof}

\subsection{Proof of Theorem~\ref{thm:sampled_delayed_siss}}
\label{app: proof_3}
\begin{proof}
Fix $R \ge \frac{\psi}{\epsilon}\,\bar d$, where $\bar d:=\sup_{k\ge 0}\max_{i\in\mathcal N}\|d_{i,k}\|_2$,
and let $T_R=\left\lceil \frac{\ln\!\big(R/(cV_{\max}(0))\big)}{\ln\rho}\right\rceil$,
where $c\ge p$ and $\rho\in(0,1)$ are the constants in Remark~\ref{rmk:explicit_beta_gamma}.
Fix an agent $i\in\mathcal N$ and a time index $k$.

Define the Razumikhin residual and decrement residual as functions of $z_{i,k}$:
\begin{align}
    h_i(z_{i,k})
    &:=
    p\,V_i(x_{i,k})
    -
    \max_{\substack{j\in\mathcal E_i\cup\{i\}\\ s\in\{1,\dots,\tau_{\max}\}}}
    V_j(x_{j,k-s}),
    \label{eq:def_hi}
    \\
    r_i(z_{i,k})
    &:=
    V_i(x_{i,k+1})
    -
    \sum_{j\in\mathcal E_i\cup\{i\}}
    \gamma_{ij}V_j(x_{j,k})
    -
    \psi \|d_{i,k}\|_2,
    \label{eq:def_ri}
\end{align}
where $x_{i,k+1}=f_i\!\big(x_{i,k},u_{i,k},\{x_{j,k-s_{ij}}\}_{j\in\mathcal E_i},d_{i,k}\big)$.

\smallskip
\noindent\textbf{Outside region.}
Consider any continuous point
\begin{align}
    z_{i,k} \in (\mathcal Z^\ast_{i,k}\setminus \mathcal S_{R,i})\times \mathcal W_i.
\end{align}
Let $\hat z_{i,k}\in\mathcal D^{\mathrm{out}}_i$ be a nearest grid point to $z_{i,k}$.
Under uniform grid sampling with per-dimension step sizes $\Delta_{i,k}$, we have
\begin{align}
    \|z_{i,k}-\hat z_{i,k}\|_2 \le \tfrac{1}{2}\|\Delta_{i,k}\|_2
    \;=\; \varepsilon_{i,k}^{\text{out}}.
    \label{eq:nn_bound}
\end{align}

Let $L_{h_i}$ be a Lipschitz constant of $h_i(\cdot)$ on
$(\mathcal Z^\ast_{i,k}\setminus \mathcal S_{R,i})\times \mathcal W_i$ so that
\begin{align}
    |h_i(z_{i,k})-h_i(\hat z_{i,k})|
    \le
    L_{h_i}\|z_{i,k}-\hat z_{i,k}\|_2
    \le
    L_{h_i,\text{upper}}\varepsilon_{i,k}^{\text{out}} .
    \label{eq:h_lip}
\end{align}
with
\begin{align}
    L_{h_i}
    \;\le\;L_{h_i,\text{upper}}=
    p\,L_{V_i}
    +
    \max_{j\in\mathcal E_i\cup\{i\}} L_{V_j},
    \label{eq:Lhi_bound}
\end{align}
where $L_{V_j}$ is a Lipschitz constant of $V_j$ on the corresponding projected domain.
(Here the maximization over $s$ in~\eqref{eq:def_hi} does not change the bound since each $V_j$ is Lipschitz with a constant $L_{V_j}$ independent of $s$.)

Similarly, let $L_{r_i}$ be a Lipschitz constant of $r_i(\cdot)$ on the same set so that
\begin{align}
    |r_i(z_{i,k})-r_i(\hat z_{i,k})|
    \le
    L_{r_i}\|z_{i,k}-\hat z_{i,k}\|_2
    \le
    L_{r_i}\varepsilon_{i,k}^{\text{out}} .
    \label{eq:r_lip}
\end{align}
with
\begin{align}
    L_{r_i}
    \;\le\;L_{r_i,\text{upper}}=
    L_{V_i}L_{f_i}
    +
    \sum_{j\in\mathcal E_i\cup\{i\}}
    |\gamma_{ij}|\,L_{V_j}
    +
    \psi,
    \label{eq:Lri_bound}
\end{align}
where we used that $d\mapsto \|d\|_2$ is $1$-Lipschitz.

Now assume the Razumikhin logic is verified on $\mathcal D_i^{\mathrm{out}}$ with the robust margin:
\begin{align}
    h_i(\hat z_{i,k}) < -L_{h_i,\text{upper}}\varepsilon^{\text{out}}_{i,k}
    \;\;\lor\;\;
    r_i(\hat z_{i,k})\le -\delta_{i,k}^{\mathrm{out}},
    \qquad \forall \hat z_{i,k}\in\mathcal D_i^{\mathrm{out}}.
    \label{eq:sampled_logic_robust}
\end{align}
Equivalently,
\begin{align}
    h_i(\hat z_{i,k})\ge -L_{h_i,\text{upper}}\varepsilon^{\text{out}}_{i,k}
    \;\Rightarrow\;
    r_i(\hat z_{i,k})\le -\delta_{i,k}^{\mathrm{out}}.
    \label{eq:implication_out}
\end{align}

Take any continuous $z_{i,k}$ such that the Razumikhin condition holds, i.e., $h_i(z_{i,k})\ge 0$.
By~\eqref{eq:h_lip} and~\eqref{eq:nn_bound},
\begin{align}
    h_i(\hat z_{i,k})
    \ge
    h_i(z_{i,k})-L_{h_i,\text{upper}}\varepsilon^{\text{out}}_{i,k}
    \ge
    -L_{h_i,\text{upper}}\varepsilon^{\text{out}}_{i,k}.
\end{align}
Thus,~\eqref{eq:implication_out} yields $r_i(\hat z_{i,k})\le -\delta_{i,k}^{\mathrm{out}}$.
Using~\eqref{eq:r_lip}, we obtain
\begin{align}
    r_i(z_{i,k})
    \le
    r_i(\hat z_{i,k}) + L_{r_i}\varepsilon_i
    \le
    -\delta_{i,k}^{\mathrm{out}} + L_{r_i,\text{upper}}\varepsilon_i^{\text{out}}.
\end{align}
Under the validity condition
\begin{align}
    \delta_{i,k}^{\mathrm{out}} \ge L_{r_i,\text{upper}}\varepsilon^{\text{out}}_{i,k},
    \label{eq:validity_out_eps}
\end{align}
we conclude $r_i(z_{i,k})\le 0$, i.e.,
\begin{align}
    V_i(x_{i,k+1})
    \le
    \sum_{j\in\mathcal E_i\cup\{i\}}\gamma_{ij}V_j(x_{j,k})
    +\psi\|d_{i,k}\|_2,
\end{align}
for all $z_{i,k}\in(\mathcal Z^\ast_{i,k}\setminus \mathcal S_{R,i})\times\mathcal W_i$ satisfying $h_i(z_{i,k})\ge 0$.

\smallskip
\noindent\textbf{Inside region.}
Now consider any continuous point $z_{i}\in \mathcal S_{R,i}\times \mathcal W_i$,
and let $\hat z_i\in\mathcal D_i^{\mathrm{in}}$ be a nearest grid point, so that
$\|z_i-\hat z_i\|_2\le \varepsilon_i$.
Assume the inside decrement inequality is verified on the sampled set:
\begin{align}
    r_i(\hat z_i)\le -\delta_i^{\mathrm{in}},
    \qquad \forall \hat z_i\in\mathcal D_i^{\mathrm{in}}.
    \label{eq:sampled_in}
\end{align}
Using~\eqref{eq:r_lip} (with a Lipschitz constant $L_{r_i}$ valid on $\mathcal S_{R,i}\times\mathcal W_i$),
we obtain
\begin{align}
    r_i(z_i)
    \le
    r_i(\hat z_i)+L_{r_i,\text{upper}}\varepsilon^{\text{in}}_i
    \le
    -\delta_i^{\mathrm{in}} + L_{r_i,\text{upper}}\varepsilon^{\text{in}}_i.
\end{align}
where $\tfrac{1}{2}\|\Delta^{\text{in}}_{i}\|_2\;=\; \varepsilon_i^{\text{in}}$.

Under the validity condition
\begin{align}
    \delta_i^{\mathrm{in}} \ge L_{r_i,\text{upper}}\varepsilon^{\text{in}}_i,
    \label{eq:validity_in_eps}
\end{align}
we conclude $r_i(z_i)\le 0$, i.e.,
\begin{align}
    V_i(x_{i,k+1})
    \le
    \sum_{j\in\mathcal E_i\cup\{i\}}\gamma_{ij}V_j(x_{j,k})
    +\psi\|d_{i,k}\|_2,
    \qquad \forall z_i\in\mathcal S_{R,i}\times\mathcal W_i.
\end{align}

Combining the outside and inside parts shows that the verified sampled conditions lift to the corresponding continuous domains required by the two-stage verification
condition in Theorem~\ref{thm:reach_constrained_siss}. This completes the proof.
\end{proof}
\subsection{Proof of Theorem~\ref{thm:Certifiable_Equivalence_delay}}
\label{app: proof_Certifiable_Equivalence_delay}

\begin{proof}
Fix any node $j\in\widetilde{\mathcal N}$ and any admissible local delayed
history of node $j$ in $\tilde{\mathcal I}$, including
$\tilde x_{j,k}$, its delayed neighbor states
$\{\tilde x_{l,k-\tilde s_{jl}}\}_{l\in\widetilde{\mathcal E}_j}$, and the
disturbance $\tilde d_{j,k}$.

Since $\tilde{\mathcal I}$ is substructure-isomorphic to $\mathcal I$ in the
sense of Definition~\ref{dfn:stru_eq_delay}, there exists a local map
\begin{align}
\iota_j:\{j\}\cup\widetilde{\mathcal E}_j\to\mathcal N
\end{align}
such that
\begin{align}
\widetilde f_j=f_{\iota_j(j)},\qquad
\iota_j(\widetilde{\mathcal E}_j)=\mathcal E_{\iota_j(j)},
\qquad
\tilde s_{jl}=s_{\iota_j(j)\iota_j(l)},\quad
\forall l\in\widetilde{\mathcal E}_j .
\end{align}
Define the corresponding local variables in $\mathcal I$ by the relabeling
\begin{align}
x_{\iota_j(l),k-s}:=\tilde x_{l,k-s},
\qquad
\forall l\in\widetilde{\mathcal E}_j\cup\{j\},\quad
s\in\{0,\dots,\tau_{\max}\},
\end{align}
and
\begin{align}
d_{\iota_j(j),k}:=\tilde d_{j,k}.
\end{align}
Then, using the preservation of local dynamics, neighbor sets, and delays, we
obtain
\begin{align}
\tilde x_{j,k+1}
&=
\widetilde f_j\!\Big(
\tilde x_{j,k},
\{\tilde x_{l,k-\tilde s_{jl}}\}_{l\in\widetilde{\mathcal E}_j},
\tilde d_{j,k}
\Big)
\notag\\
&=
f_{\iota_j(j)}\!\Big(
x_{\iota_j(j),k},
\{x_{\iota_j(l),k-s_{\iota_j(j)\iota_j(l)}}\}_{l\in\widetilde{\mathcal E}_j},
d_{\iota_j(j),k}
\Big)
\notag\\
&=
x_{\iota_j(j),k+1}.
\label{eq:iso_delay_traj}
\end{align}

Since $\mathcal I$ admits a vector Lyapunov--Razumikhin certificate
$\{V_i\}_{i\in\mathcal N}$, the conditions in
Theorem~\ref{thm:lrf-siss} hold for the node $\iota_j(j)$. Define the
candidate certificate and gains for $\tilde{\mathcal I}$ as
\begin{align}
\widetilde V_j:=V_{\iota_j(j)},
\qquad
\widetilde\gamma_{jl}:=\gamma_{\iota_j(j)\iota_j(l)},
\quad
\forall l\in\widetilde{\mathcal E}_j\cup\{j\}.
\end{align}
By the consistency assumption in Theorem~\ref{thm:Certifiable_Equivalence_delay},
for every $l\in\widetilde{\mathcal E}_j\cup\{j\}$, we have
\begin{align}
\widetilde V_l
=
V_{\iota_l(l)}
=
V_{\iota_j(l)} .
\label{eq:certificate_consistency_delay}
\end{align}

We first verify the class-$\mathcal K_\infty$ bounds. Since
$\widetilde V_j=V_{\iota_j(j)}$ and
$x_{\iota_j(j),k}=\tilde x_{j,k}$, the bounds for $V_{\iota_j(j)}$ imply
\begin{align}
\alpha_1(\|\tilde x_{j,k}\|_2)
\le
\widetilde V_j(\tilde x_{j,k})
\le
\alpha_2(\|\tilde x_{j,k}\|_2).
\end{align}

Next, suppose that the Razumikhin condition for node $j$ in
$\tilde{\mathcal I}$ holds, i.e.,
\begin{align}
\max_{\substack{l\in\widetilde{\mathcal E}_j\cup\{j\}\\
s\in\{1,\dots,\tau_{\max}\}}}
\widetilde V_l(\tilde x_{l,k-s})
\le
p\,\widetilde V_j(\tilde x_{j,k}).
\label{eq:tilde_raz_condition}
\end{align}
Using \eqref{eq:certificate_consistency_delay} and the local relabeling, this
condition is equivalent to
\begin{align}
\max_{\substack{l\in\widetilde{\mathcal E}_j\cup\{j\}\\
s\in\{1,\dots,\tau_{\max}\}}}
V_{\iota_j(l)}(x_{\iota_j(l),k-s})
\le
p\,V_{\iota_j(j)}(x_{\iota_j(j),k}).
\end{align}
Since $\iota_j(\widetilde{\mathcal E}_j)=\mathcal E_{\iota_j(j)}$, this is
exactly the Razumikhin condition for node $\iota_j(j)$ in $\mathcal I$.
Therefore, by the decrement condition of Theorem~\ref{thm:lrf-siss}, we have
\begin{align}
V_{\iota_j(j)}(x_{\iota_j(j),k+1})
\le
\sum_{l\in\widetilde{\mathcal E}_j\cup\{j\}}
\gamma_{\iota_j(j)\iota_j(l)}
V_{\iota_j(l)}(x_{\iota_j(l),k})
+
\psi\|d_{\iota_j(j),k}\|_2 .
\end{align}
Using \eqref{eq:iso_delay_traj}, \eqref{eq:certificate_consistency_delay}, and
the definitions of $\widetilde V_j$ and $\widetilde\gamma_{jl}$, the above
inequality becomes
\begin{align}
\widetilde V_j(\tilde x_{j,k+1})
\le
\sum_{l\in\widetilde{\mathcal E}_j\cup\{j\}}
\widetilde\gamma_{jl}\widetilde V_l(\tilde x_{l,k})
+
\psi\|\tilde d_{j,k}\|_2 .
\end{align}
Thus the Lyapunov--Razumikhin decrement condition holds for node $j$ in
$\tilde{\mathcal I}$.

It remains to check the small-gain condition. Since the gains of
$\tilde{\mathcal I}$ are copied from those of $\mathcal I$ through the local
substructure correspondence, we have
\begin{align}
\sum_{l\in\widetilde{\mathcal E}_j\cup\{j\}}
\widetilde\gamma_{jl}
=
\sum_{l\in\widetilde{\mathcal E}_j\cup\{j\}}
\gamma_{\iota_j(j)\iota_j(l)}
=
\sum_{r\in\mathcal E_{\iota_j(j)}\cup\{\iota_j(j)\}}
\gamma_{\iota_j(j)r}
\le
1-\epsilon .
\end{align}
Because $j\in\widetilde{\mathcal N}$ was arbitrary, all nodes in
$\tilde{\mathcal I}$ satisfy the class-$\mathcal K_\infty$ bounds, the
Razumikhin decrement condition, and the small-gain condition. Therefore,
$\tilde{\mathcal I}$ admits the vector Lyapunov--Razumikhin certificate
$\{\widetilde V_j\}_{j\in\widetilde{\mathcal N}}$ with gains
$\widetilde\Gamma=\{\widetilde\gamma_{jl}\}$. By
Theorem~\ref{thm:lrf-siss}, $\tilde{\mathcal I}$ is sISS. This completes the
proof.
\end{proof}

\subsection{Proof of Theorem~\ref{thm:eq_nodes_delay}}
\label{app: proof_eq_nodes_delay}
\begin{proof}
By Definition~\ref{dfn:eq_nodes_delay}, there exists a permutation $\iota$ of $\mathcal N$ such that, for each $i\in\mathcal N$,
(i) $f_i = f_{\iota(i)}$,
(ii) $\iota(\mathcal E_i)=\mathcal E_{\iota(i)}$,
and (iii) for each $j\in\mathcal E_i$, the delay satisfies $s_{ij}=s_{\iota(i)\iota(j)}$.
By Theorem~5.3 in Gallian~\citep{gallian2021contemporary}, any permutation of a finite set has a finite order, i.e., there exists $M\in\mathbb Z_+$ such that $\iota^M(i)=i$ for all $i\in\mathcal N$.

Since $\mathcal I$ satisfies the delayed sISS conditions, it admits a vector Lyapunov-Razumikhin certificate $\{V_i\}_{i\in\mathcal N}$ with coupling gains $\Gamma=\{\gamma_{ij}\}$ (as in Theorem~\ref{thm:lrf-siss}).
For each $m\in\{0,\dots,M-1\}$, consider the permuted family of functions
\[
V_i^{(m)} := V_{\iota^m(i)},\qquad i\in\mathcal N.
\]
We claim that $\{V_i^{(m)}\}_{i\in\mathcal N}$ is also a valid delayed sISS certificate for $\mathcal I$.
Indeed, fix any agent $i$ and any admissible delayed history
$\{x_{i,k-s}\}_{s=0}^{\tau}$ together with neighbor histories $\{x_{j,k-s_{ij}}\}_{j\in\mathcal E_i}$ and disturbance $d_{i,k}$.
Because $\iota$ preserves dynamics, neighbor sets, and link delays, we have the trajectory consistency
\begin{align}
x_{\iota^m(i),k+1}
&= f_{\iota^m(i)}\!\Big(x_{\iota^m(i),k},\,\{x_{\iota^m(j),k-s_{\iota^m(i)\iota^m(j)}}\}_{j\in\mathcal E_i},\, d_{\iota^m(i),k}\Big)\notag\\
&= f_i\!\Big(x_{i,k},\,\{x_{j,k-s_{ij}}\}_{j\in\mathcal E_i},\, d_{i,k}\Big)
= x_{i,k+1},
\label{eq:eq_nodes_delay_traj}
\end{align}
where we used $f_{\iota^m(i)}=f_i$, $\iota^m(\mathcal E_i)=\mathcal E_{\iota^m(i)}$, and
$s_{\iota^m(i)\iota^m(j)}=s_{ij}$ for all $j\in\mathcal E_i$.
Consequently, applying the delayed sISS inequalities of Theorem~\ref{thm:lrf-siss} to the index $\iota^m(i)$ yields that, for some $\alpha_1,\alpha_2\in\mathcal K_\infty$ and suitable gains,
\begin{align}
\alpha_1(\|x_{i,k}\|_2)
~\le~
V_{\iota^m(i)}(x_{i,k})
~\le~
\alpha_2(\|x_{i,k}\|_2),
\label{eq:eq_nodes_delay_pos}
\\
\text{and if the Razumikhin condition }\;
\max_{\substack{j\in\mathcal E_i\cup\{i\}\\ s\in\{1,\dots,\tau_{\max}\}}}
V_{\iota^m(j)}(x_{j,k-s})
~\le~ p\,V_{\iota^m(i)}(x_{i,k})
\;\text{ holds, then }\notag\\
V_{\iota^m(i)}(x_{i,k+1})
~\le~
\sum_{j\in\mathcal E_i\cup\{i\}}
\gamma_{\iota^m(i)\iota^m(j)}\, V_{\iota^m(j)}(x_{j,k})
+\psi\|d_{i,k}\|_2,
\label{eq:eq_nodes_delay_dec}
\end{align}
with the same delay indices $s_{ij}$ as in the original system due to \eqref{eq:eq_nodes_delay_traj}.
Thus each permuted family $\{V_i^{(m)}\}$ is a valid delayed sISS certificate.

Now define the max-aggregated certificate
\begin{align}
\widetilde V_i(x)
:= \max_{m=0,\dots,M-1} V_{\iota^m(i)}(x),\qquad \forall i\in\mathcal N.
\label{eq:eq_nodes_delay_tildeV}
\end{align}
Since $\iota$ has order $M$, $\widetilde V_i$ is invariant under $\iota$, i.e.,
$\widetilde V_{\iota(i)}=\widetilde V_i$ for all $i\in\mathcal N$.
In particular, any two structurally equivalent nodes $i$ and $j=\iota(i)$ share the same certificate function.

Pick $m^\ast\in\arg\max_{m}V_{\iota^m(i)}(x_{i,k+1})$.
Then $\widetilde V_i(x_{i,k+1})=V_{\iota^{m^\ast}(i)}(x_{i,k+1})$ and \eqref{eq:eq_nodes_delay_pos} implies the $\mathcal K_\infty$ bounds for $\widetilde V_i$.
Moreover, using \eqref{eq:eq_nodes_delay_dec} and the fact that
$V_{\iota^{m^\ast}(j)}(x_{j,k})\le \widetilde V_j(x_{j,k})$ for all $j$, we obtain
\begin{align}
\widetilde V_i(x_{i,k+1})
&=V_{\iota^{m^\ast}(i)}(x_{i,k+1}) \notag\\
&\le \sum_{j\in\mathcal E_i\cup\{i\}}
\gamma_{\iota^{m^\ast}(i)\iota^{m^\ast}(j)}\,V_{\iota^{m^\ast}(j)}(x_{j,k})
+\psi\|d_{i,k}\|_2 \notag\\
&\le \sum_{j\in\mathcal E_i\cup\{i\}}
\gamma_{\iota^{m^\ast}(i)\iota^{m^\ast}(j)}\,\widetilde V_j(x_{j,k})
+\psi\|d_{i,k}\|_2.
\label{eq:eq_nodes_delay_tilde_dec}
\end{align}
The corresponding permuted gain matrix
$\widetilde\Gamma=\{\gamma_{\iota^{m^\ast}(i)\iota^{m^\ast}(j)}\}$
inherits the small-gain property from $\Gamma$.
Therefore, $\{\widetilde V_i\}_{i\in\mathcal N}$ satisfies the delayed sISS conditions in Theorem~\ref{thm:lrf-siss}, and by construction
$\widetilde V_i=\widetilde V_j$ for any structurally equivalent nodes $i,j\in\mathcal N$.
This completes the proof.
\end{proof}

\section{CEGIS Loop}
\label{app:cegis loop}
Algorithm~\ref{alg:joint-training-cegis} summarizes our CEGIS loop for neural vector Lyapunov-Razumikhin certificates synthesis.
Starting from an initial dataset from nominal controller rollouts, we iteratively (i) train the candidate certificate $(\Gamma,V_{i\in\mathcal{N}})$ by minimizing the synthesis loss and (ii) verify the conditions in Theorem~\ref{thm:reach_constrained_siss} over reachability-constrained delay domains.
Whenever the verifier discovers violating delayed histories, they are appended to the training set to refine the certificate, and the loop terminates once no counterexamples are found.
\begin{algorithm}[h]
\caption{CEGIS Loop for Neural Vector Lyapunov–Razumikhin Certificates}
\label{alg:joint-training-cegis}
\begin{algorithmic}[1]
\STATE \textbf{Input:} Initial dataset $\{\mathcal{D}_{i,k}\}_{i\in\mathcal{N},k\in [0,\cdots,T_R]},\{\mathcal{D}^{\text{out}}_i\}_{i\in\mathcal{N}}$ (e.g., nominal-controller rollouts), initial coupling gains $\Gamma=\{\gamma_{ij}\}$, Lyapunov-Razumikhin functions $\{V_{i}\}_{i\in\mathcal{N}}$.
\STATE \textbf{Output:} Verified certificate $(\Gamma,\{V_{i}\}_{i\in\mathcal{N}})$.
\STATE \textbf{Initialize:} Initialize $\Gamma$ and $\{V_{i}\}_{i\in\mathcal{N}}$.
\REPEAT
    \STATE Train $\Gamma$ and $\{V_{i}\}_{i\in\mathcal{N}}$ by minimizing the synthesis loss (i.e., Eq.~(\ref{eq:loss})).
    \STATE Verify the delayed Razumikhin conditions in  Theorem~\ref{thm:sampled_delayed_siss} over the reachability-constrained delay domains.
    \IF{counterexamples violating the conditions are found}
        \STATE Augment $\{\mathcal{D}_{i,k}\}_{i\in\mathcal{N},k\in [0,\cdots,T_R]},\{\mathcal{D}^{\text{out}}_i\}_{i\in\mathcal{N}}$ with the counterexamples.
    \ENDIF
\UNTIL{no counterexamples are found after verification.}
\end{algorithmic}
\end{algorithm}

\section{Illustrative Examples}
\label{app:Illustrative Examples}
Figure~\ref{fig:reducible-topology-example} shows three representative interconnection topologies: a star (a), a small tree (b), and a ring (c). 
An arrow $j\to i$ means $j\in\mathcal{E}_i$. 
For the star in (a), the hub is unique while all leaves are structurally equivalent (each interacts only with the hub), so verifying sISS for the hub and one leaf is sufficient and the remaining leaves follow by equivalence. 
Analogous reductions hold for the tree in (b) and the ring in (c).

\begin{figure}[h!]
\centering
\begin{subfigure}[b]{0.14\textwidth}
\centering
\begin{tikzpicture}[->, >=Stealth, node distance=1cm, scale=0.6, transform shape]
    \tikzstyle{normal}=[draw, circle, fill=white, minimum size=6mm]
    \tikzstyle{highlight}=[draw, circle, fill=orange!35, minimum size=6mm] 
    \node[highlight] (C) at (0,0) {C}; 
    \node[normal] (L2) at (45:2) {2};
    \node[highlight] (L3) at (315:2) {3}; 
    \node[normal] (L5) at (225:2) {4};
    \node[normal] (L6) at (135:2) {1};
    \draw[->] (C) -- (L2);
    \draw[->] (C) -- (L3);
    \draw[->] (C) -- (L5);
    \draw[->] (C) -- (L6);
\end{tikzpicture}
\caption{Star topology}
\label{subfig:star-topology}
\end{subfigure}
\begin{subfigure}[b]{0.17\textwidth}
\centering
\begin{tikzpicture}[->, >=Stealth, node distance=1.2cm, scale=0.6, transform shape]
    \tikzstyle{normal}=[draw, circle, fill=white, minimum size=6mm]
    \tikzstyle{highlight}=[draw, circle, fill=orange!35, minimum size=6mm] 
    \node[highlight] (N1) at (0,0) {1};
    \node[normal] (N2) at (-1.2,-1.4) {2};
    \node[highlight] (N3) at (1.2,-1.4) {3};
    \node[highlight] (N4) at (1.2,-2.8) {4};
    \node[normal] (N5) at (-1.2,-2.8) {5};
    \draw[->] (N1) -- (N2);
    \draw[->] (N1) -- (N3);
    \draw[->] (N3) -- (N4);
    \draw[->] (N2) -- (N5);
\end{tikzpicture}
\caption{Tree topology}
\label{subfig:tree-topology}
\end{subfigure}
\begin{subfigure}[b]{0.15\textwidth}
\centering
\begin{tikzpicture}[->, >=Stealth, scale=0.6, transform shape]
    \tikzstyle{normal}=[draw, circle, fill=white, minimum size=6mm]
    \tikzstyle{highlight}=[draw, circle, fill=orange!35, minimum size=6mm] 
    \node[normal] (R1) at (90:1.5) {1};
    \node[normal] (R2) at (18:1.5) {2};
    \node[highlight] (R3) at (-54:1.5) {3};
    \node[normal] (R4) at (-126:1.5) {4};
    \node[normal] (R5) at (162:1.5) {5};
    \draw[->] (R1) -- (R2);
    \draw[->] (R2) -- (R3);
    \draw[->] (R3) -- (R4);
    \draw[->] (R4) -- (R5);
    \draw[->] (R5) -- (R1);
\end{tikzpicture}
\caption{Ring topology}
\label{subfig:ring-topology}
\end{subfigure}
\caption{Representative network topologies demonstrating structural equivalence among nodes. Highlighted nodes are the selected representatives to verify.}
\label{fig:reducible-topology-example}
\end{figure}
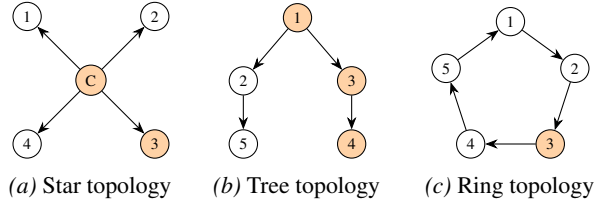

\section{Experiment Settings}
\label{app: Experiment Settings}
\subsection{System Dynamics for Application Environment}
\noindent\textbf{Mixed-autonomy platoon.} We consider a mixed-autonomy platoon composed of CAVs $\Omega_{\mathcal{C}}$ and HDVs $\Omega_{\mathcal{H}}$, where $n := |\Omega_{\mathcal{C}}\cup\Omega_{\mathcal{H}}|$ denotes the total number of vehicles and $m := |\Omega_{\mathcal{C}}|$ denotes the number of CAVs. For each vehicle $i\in\Omega_{\mathcal{C}}\cup\Omega_{\mathcal{H}}$, under a sampling period $T>0$, the discrete-time longitudinal dynamics are
\begin{align}
&s_{i,k+1} = s_{i,k} + T\big(v_{i-1,k} - v_{i,k}\big), \\
&v_{i,k+1} = v_{i,k} + T 
\begin{cases}
u_{i,k}, & i \in \Omega_\mathcal{C},\\
\mathbb{F}_i(s_{i,k},v_{i,k},v_{i-1,k}), & i \in \Omega_\mathcal{H},
\end{cases}
\end{align}
where $s_{i,k}$ and $v_{i,k}$ denote the spacing and velocity at step $k$. The CAV input $u_{i,k}$ is generated by an RL policy~\cite{11277300}, whereas each HDV follows an unknown car-following model.

\noindent \textbf{Drone formation.}
We consider a leader-follower formation in $\mathbb{R}^3$ consisting of one leader and $N_f$ followers. Let $p_\ell, v_\ell, u_\ell \in \mathbb{R}^3$ denote the leader's position, velocity, and acceleration, and let $p_i, v_i, u_i$ denote the corresponding quantities for follower $i$. The leader tracks a predefined trajectory:
\begin{align}
p_{\ell,k+1} &= p_{\ell,k} + T\, v_{\ell,k},\\
v_{\ell,k+1} &= v_{\ell,k} + T\, u_{\ell,k}.
\end{align}
Each follower evolves according to
\begin{equation}
x_{i,k+1} = \mathrm{F}_i\big(x_{i,k},u_{i,k},x^r_{i,k}\big),
\end{equation}
where $x_{i,k}=[p_{i,k},v_{i,k}]^\top$ and $x^r_{i,k}$ denotes the reference state (from the leader or the preceding drone). The true dynamics $\mathrm{F}_i$ are unknown and are estimated from data. The follower control input is generated by a neural policy trained via supervised learning to imitate a known controller.

\noindent \textbf{Microgrid.} We consider a microgrid with $n$ voltage-source inverters indexed by $\mathcal{N}=\{1,\dots,n\}$, interconnected in a radial topology with neighbor sets $\mathcal{E}_i$. Each inverter $i$ has phase angle $\delta_{i,k}$, voltage magnitude $U_{i,k}>0$, and frequency $\omega_{i,k}$, with state $x_{i,k}=(\delta_{i,k},\omega_{i,k},\xi_{i,k})$. The active power injection at node $i$ is
\begin{equation}
P_{i,k} = P_{L,i} + \sum_{j\in\mathcal{E}_i} \alpha_{ij} \sin(\delta_{i,k}-\delta_{j,k}),
\qquad
\alpha_{ij} = |B_{ij}|U_{i,k}U_{j,k},
\end{equation}
where $P_{L,i}$ is the load demand. The closed-loop dynamics satisfy
\begin{align}
&\delta_{i,k+1} = \delta_{i,k} + T\omega_{i,k},\\
&\omega_{i,k+1} = \omega_{i,k} + \frac{T}{\tau_i}
\big[ -(\omega_{i,k}-\omega^*) - \eta_i(P_{i,k}-P_i^*) + \xi_{i,k} \big],\\
&\xi_{i,k+1} = \xi_{i,k} + Tu_{i,k},
\end{align}
where $\tau_i>0$ is the frequency-loop time constant, $\eta_i>0$ is the active-power droop gain, $P_i^*$ is the active-power setpoint, and $\omega^*$ denotes the nominal frequency. The secondary control input is updated by a neural policy:
\begin{equation}
u_{i,k} = \pi_{\theta_i}\big(x_{i,k},\{x_{j,k}\}_{j\in\mathcal{E}_i}\big).
\end{equation}

\subsection{Implementation Details}
\label{app:Implementation Details}
\paragraph{Hardware and System Configuration.}
All experiments were conducted on a Linux workstation running Ubuntu 24.04, equipped with an Intel Core Ultra 9 285K CPU (24 cores), 125 GB RAM, and two NVIDIA GeForce RTX 5090 GPUs (32 GB memory each). The system uses Linux kernel 6.14.0 with CUDA 12.8 and NVIDIA driver 570.153.02.

\paragraph{Training Details.} Each CEGIS loop is trained for at most 100 epochs, for up to 100 iterations. The learning rate is initialized at 0.001 and decayed according to a predefined schedule. The initial dataset consists of 30000 trajectories of length 50, split into 80\% for training and 20\% for validation, with a batch size of 32. Both the Lyapunov function and the controller use three-layer fully connected networks with 64 hidden units per layer. Each runtime experiment is run three times, and we report the mean $\pm$ standard deviation. For simplicity, we set the disturbance to $d_{i,k}\equiv 0$, so the reachable sets (and the corresponding rollouts used to construct them) are generated by simulating the closed-loop nominal dynamics.

\paragraph{Hyperparameter.} In the loss function (Eq.~\eqref{eq:loss}), we set $\epsilon_p=10^{-3}$ and $\epsilon_d=10^{-6}$, with weighting coefficients $w_{\text{imi}}=1$, $w_p=4000$, and $w_d=2000$. For the verification in Eq.~\eqref{eq:sampled_logic_out_compact}, we use $p=c=1.01$, $\rho=0.95$, and $R=0.15$ across all three environments. We further set $\Delta_{i,k}=0.05$, $\Delta_{i}^{\text{in}}=0.01$, and $T_R=50$.

\section{Future Work}
Several directions are of immediate interest. First, the current analysis can be extended to time-varying communication delays with a time-varying upper bound, allowing the Razumikhin conditions and the reachability-constrained domains to adapt online to $\tau_{\max}(k)$. Second, safety guarantees in delayed environments can be incorporated by jointly synthesizing and verifying safety certificates alongside the delayed-sISS certificate, enabling certified safe operation under bounded delays and disturbances. Third, tighter reachability over-approximations can be developed to reduce conservatism in the verification domains, for example by exploiting monotonicity/contractivity properties and compositional set propagation, thereby improving scalability and reducing false counterexamples in the CEGIS loop.

\end{document}